\documentclass[10pt,a4paper,oneside]{article}
\usepackage[utf8]{inputenc}
\usepackage{calc}
\usepackage[english]{babel}
\usepackage{mathtools, amsmath, amssymb, amsthm, bbm}
\usepackage{physics}
\usepackage{color}
\usepackage[normalem]{ulem}
\usepackage[backend=biber, style=numeric, sorting=none]{biblatex}
\addbibresource{mybibforSG.bib}
\usepackage{hyperref}

\usepackage{graphicx}
\usepackage{caption}
\usepackage{subcaption}
\usepackage{authblk} 

\newtheorem{theorem}{Theorem}[section]
\newtheorem{lemma}[theorem]{Lemma}
\newtheorem{proposition}[theorem]{Proposition}
\newtheorem{corollary}[theorem]{Corollary}

\newtheorem{remark}[theorem]{Remark}
\newtheorem{example}[theorem]{Example}
\newtheorem{assumption}[theorem]{Assumption}

\newtheorem{result}[theorem]{Result}

\def\res{\mathop{Res}\limits}
\newcommand{\bZ}{\mathbb Z}

\numberwithin{equation}{section}


\title{Scaling Properties of Current Fluctuations in Periodic TASEP}

\author[1]{Anastasiia Trofimova\thanks{anastasiia.trofimova@gssi.it}}
\author[1]{Lu Xu \thanks{lu.xu@gssi.it}}

\affil[1]{Gran Sasso Science Institute, Viale Francesco Crispi 7, 67100 L’Aquila, Italy
}
\date{}
\begin{document}

\emergencystretch 3em
	
\maketitle
	
	\begin{abstract}
		We study current fluctuations in the Totally Asymmetric Simple Exclusion Process (TASEP) on a ring with $N$ sites and $p$ particles.
        By introducing a deformation parameter $\gamma$, we analyze the tilted operator that governs the statistics of the time-integrated current.
        Employing the coordinate Bethe ansatz, we derive implicit expressions for the scaled cumulant generating function (SCGF), i.e. the largest eigenvalue, and the spectral gap, both in terms of Bethe roots.
        Their asymptotic behaviour is characterized by using the geometric structure of Cassini oval.

        In the thermodynamic limit at fixed particle density, we identify a dynamical phase transition separating fluctuation regimes. For $\gamma>0$, the SCGF exhibits ballistic growth with system size, $\lambda_1 \sim N$.
        In contrast, for $\gamma<0$, the SCGF converges to $-1$ as $N\to\infty$.
        This transition is reflected in the spectral gap, which controls the system's relaxation timescale.
        For $\gamma>0$, the gap closes at polynomial speed, $\Delta \sim N^{-1}$, consistent with rapid relaxation with enhanced current.
        For $\gamma<0$, the gap vanishes exponentially, $\Delta \sim \exp(-cN)$, signaling metastability with diminished current.
        Our non-perturbative results provide insights into large deviations and the relaxation dynamics in driven particle systems.
	\end{abstract}
	\tableofcontents
	\newpage
	\section{Introduction}  
The Totally Asymmetric Simple Exclusion Process (TASEP) is a fundamental model in non-equilibrium statistical mechanics that captures essential features of driven transport in one-dimensional systems \cite{liggett1985interacting, Spohn91}.
In TASEP, particles hop unidirectionally along a discrete lattice with hard-core repulsion: each site can hold at most one particle.
As an exactly solvable model in the Kardar--Parisi--Zhang (KPZ) universality class \cite{KardarParisiZhang1986dynamic}, it serves as a prototype for a broad variety of stochastic growth phenomena.

Under periodic boundary conditions, the Bethe ansatz provides exact results for the spectrum of the Markov generator.
The spectral gap, which corresponds to the inverse of the longest relaxation time of the dynamics, scales as $N^{-\frac32}$ with system size $N$. This was proved at half-filling in the seminal work \cite{GwaSpohn1992spgap} and for arbitrary density in \cite{kim1995bethe}, see also \cite{2004GolinelliMallick, GolinelliMallick2005spectral}.

Furthermore, the Bethe ansatz enables the analysis of the large deviations of the \emph{time-integrated particle current} \cite{1999LebowitzGallavotti, derrida1998exactLDfunction, 1999Derrida1Appert}.
By tilting the generator with a factor $e^\gamma$ in the hopping rate, one biases the dynamics toward atypical currents.
The largest eigenvalue of the tilted operator gives the scaled cumulant generating function (SCGF) of the current.
Its Legendre--Fenchel transform then yields the large deviation rate function.
Particularly in \cite{1999Derrida1Appert}, the authors computed the SCGF for all $\gamma$ in closed form.
They showed that for $\gamma>0$ the SCGF grows linearly in $N$, and for $\gamma<0$ it converges to $-1$ as $N\to\infty$.
They also characterized the distinct scaling behaviours of the left and right tails of the large deviation rate function, which is conjectured to be universal within the KPZ class.

The research landscape has expanded to the partially asymmetric case, in which particles jump to the right at rate $p$ and left at rate $q\not=p$.
Conditioning on atypically large current, space-time correlations, conformal invariance, the emergence of hyperuniform macrostates, and the different optimal transport mechanisms are explored by studies in the limit $\gamma \to +\infty$ \cite{1995Schutz, 2017KarevskiSchutzConformal, 2010PopkovSchutzSimon}.

At the relaxation timescale $t \sim N^{\frac32}$, the current fluctuation in periodic TASEP exhibits a crossover between KPZ-type scale and the usual Gaussian scale, as observed in \cite{prolhac2016finite} at half-filling and in \cite{BaikLiu2018fluctuations, liu2018height} for general densities.
Recent works \cite{BaikLiu2019multipoint, BaikLiu2021periodic} have provided exact space-time joint distributions using the determinant representation of the transient probabilities at any time \cite{schutz1997exact, priezzhev2003exact}.
For TASEP on the infinite lattice, the fluctuation size grows as $t^{\frac13}$ and the scaling limit is given by the Tracy--Widom distribution.
This is first proved in \cite{johansson2000shape} by mapping the dynamics to last passage percolation, see also \cite{BaikRains2000limiting, PrahoferSpohn2002current, RakosSchutz2005current}.
The study is then extended to the dynamics of tagged particle \cite{ImamuraSasamoto2007dynamics}, as well as the joint distributions at multiple space-time points and along space-like paths \cite{BorodinFerrari2007fluctuation, BorodinFerrari2008large, BorodinFerrariSasamoto2008large, PovolotskyPriezhevSchutz2011generalized, PoghosyanPovolotskyPriezzhev2012universal}.

Complementing these probabilistic and spectral methods, recent study on vertex models related to TASEP has deepened understanding of the macroscopic structure. In \cite{deGierKenyonWatson2021}, the authors considered the asymmetric five-vertex model---a relative of the six-vertex model which generalizes TASEP configurations. They computed its free energy defined via the largest eigenvalue of the transfer matrix, and the associated surface tension function, providing a variational principle for limit shapes. This geometric approach to analyzing integral equations that arise from Bethe ansatz-like structures helps further clarify the connection between microscopic configurations and macroscopic profiles.

Recent advances have developed a geometric approach to the Bethe ansatz, based on the analysis of the complex structure of Bethe roots using Riemann surfaces and Cassini ovals \cite{prolhac2020riemann, prolhac2022riemann}.  This method was motivated by the challenge of describing the full spectrum of the tilted operator. Employed to TASEP, \cite{2025IwaoMotegiCompleteness} showed the completeness of the Bethe ansatz, i.e. the Bethe ansatz equations yield all eigenstates, by counting connected components and classifying spectral degeneracies. It complements the earlier result proved for partially asymmetric exclusion with a generic set of hopping rates \cite{2017BrattainDoSaenz}.

The goal of the present paper is to investigate the particle current in periodic TASEP when the deformation parameter $\gamma$ is finite.
For all $\gamma \in \mathbb R$, we analyze the SCGF and the spectral gap of the tilted operator in the thermodynamic limit $N\to\infty$ with the density $\rho$ fixed, and perform exact results of the leading order as functions of $(\gamma,\rho)$.
Roughly speaking, under specific conjectures on the structure of the Bethe roots, we show the following.
\begin{itemize}
\item The SCGF grows linearly in $N$ for $\gamma>0$: $\lambda_1 \sim N\Lambda(\gamma,\rho)$, while for $\gamma<0$, $\lambda_1 \sim -1+e^{-cN}$, see Result \ref{result: lambda1}.
\item The spectral gap scales inversely with $N$ for $\gamma>0$: $\Delta \sim N^{-1}g(\gamma,\rho)$, while for $\gamma<0$, $\Delta$ vanishes exponentially, see Result \ref{result: sp gap}.
\end{itemize}
The precise definition of the functions $\Lambda$, $g$ is given in Section \ref{sec: asym+}, Theorem \ref{thm: asym+}.
The result for $\gamma>0$ enlightens the transition between the KPZ fluctuation ($\gamma=0$) and the ballistic fluctuation ($\gamma\to+\infty$).
The exponential decay of the spectral gap for $\gamma<0$ reflects the metastability of the system with suppressed current.

The paper is organized as follows. Section \ref{sec: Model and results} defines the model and summarizes the main results. Section \ref{sec: sp properties} is devoted to the analysis of the spectral properties using the Bethe ansatz. We also state rigorously in Section \ref{subsec: ass} our assumptions on the correspondence between Bethe roots and the largest and the second largest eigenvalues. The proof of the thermodynamic limit is given in Section \ref{sec: Asymptotic analysis}. Finally, an a priori bound and detailed computation concerning the Cassini oval are given in Appendix \ref{app boundedness} and \ref{app: contour integration}, respectively.

\section{Model and main results} \label{sec: Model and results}
    \subsection{State space and dynamics of TASEP}
	TASEP is a one-dimensional stochastic interacting particle model formulated as a continuous time Markov process $\boldsymbol{\eta}(t): \mathbb{R}_{\geq 0} \rightarrow \Omega_{N,p}$ on the state space 
	$$
	\Omega_{N,p} := \biggl\{ \boldsymbol{\eta} = (\eta_1, \dots, \eta_N) \in \{0,1\}^{\bZ_N} \;\Big|\; \sum_{x=1}^N \eta_x = p \biggr\},
	$$
consisting of particle configurations on a periodic one-dimensional lattice $\bZ_N = \bZ/N \bZ$ with $N$ sites (sites $i$ and $i+N$ are identified) and $p$ particles, with each site occupied by at most one particle.
The number of configurations is
     \begin{equation} \label{eq: def of |Omega|}
	 |\Omega_{N,p}| = \binom{N}{p}.
	 \end{equation}
	
	The dynamics evolves in continuous time as follows. Each particle has an independent exponential clock of rate one. When a particle's clock rings, it jumps to the neighbouring right site only if it is empty. 
     The dynamics preserves the total number of particles and hence the particle density $\rho := p/N$. 

     Fix an initial probability distribution $P_0$ on $\Omega_{N,p}$. Let $P_t$ be the probability distribution at time $t$: $P_t(\boldsymbol{\eta}) := P(\boldsymbol{\eta}(t)=\boldsymbol{\eta})$. Its evolution is governed by the master equation (Kolmogorov forward equation)
	 	\begin{equation}
	 		\label{eq: Master eqn}
	 		\frac{d}{dt} P_t(\boldsymbol{\eta}) = (L P_t)(\boldsymbol{\eta}),
	 	\end{equation}
	 	where generator $L$ is a linear operator defined by
	 	\begin{equation} \label{eq: def L operator}
	 		L P_t(\boldsymbol{\eta}) = \sum_{x=1}^N \eta_{x+1}(1-\eta_{x}) \left( P_t({\boldsymbol{\eta}}^{x,x+1}) - P_t(\boldsymbol{\eta}) \right).
	 	\end{equation}
	 Here, $\boldsymbol{\eta}^{x,x+1}$ denotes the configuration obtained from $\boldsymbol{\eta}$ by exchanging the occupation numbers $\eta_{x}$ and $\eta_{x+1}$.
     
     The action of $L$ has a matrix representation: $LP_t=MP_t$, where $M$ is a transition rate matrix of size $|\Omega_{N,p}|$ such that
     	\begin{equation} \label{eq: def M}
		M(\boldsymbol{\eta}, \boldsymbol{\eta}') := 
\begin{cases}
    \eta_{x+1}(1 - \eta_x), & \text{if } \boldsymbol{\eta}' = \boldsymbol{\eta}^{x,x+1}, \\
    -\sum_{x=1}^N \eta_{x+1}(1 - \eta_x), & \text{if } \boldsymbol{\eta}' = \boldsymbol{\eta}.
\end{cases}
\end{equation}
As the generator of a finite irreducible Markov process, the largest eigenvalue of $M$ is $0$, and the unique stationary distribution is uniform over $\Omega_{N,p}$
 \begin{equation} \label{eq: uniform distribution}
	 	P^{\mathrm{stat}}(\boldsymbol{\eta}) = |\Omega_{N,p}|^{-1}, \qquad  \forall\,\  \boldsymbol{\eta} \in \Omega_{N,p}.
	 \end{equation}

	\subsection{Particle current}
	Consider the \emph{time-integrated total particle current} $Y(t): \mathbb{R}_{\geq 0} \rightarrow \mathbb{N}$, which is the additive random variable on the trajectories of the process that counts the total number of rightward jumps that have occurred across all sites up to time $t$. In other words, $Y(0)=0$ and every time a particle moves to the right at any site, the value of $Y(t)$ increases by $1$.
    
    To study the evolution of $Y(t)$, observe that the joint process $(\boldsymbol{\eta}(t), Y(t))$ is a Markov process on the state space $\Omega_{N,p} \times \mathbb{N}$.
    Denote by $p_t(\boldsymbol{\eta}, Y)$ the joint probability distribution of $(\boldsymbol{\eta}(t),Y(t))$:
    $$
    p_t(\boldsymbol{\eta}, Y) := P\big(\boldsymbol{\eta}(t)=\boldsymbol{\eta},Y(t)=Y\big), \qquad (\boldsymbol{\eta},Y) \in \Omega_{N,p} \times \mathbb{N}.
    $$
    The master equation of $p_t$ reads
    \begin{equation} \label{eq: joint Markov operator}
        \frac{d}{dt}p_t(\eta,Y) = \sum_{x=1}^N \eta_{x+1}(1-\eta_{x}) \left( p_t({\boldsymbol{\eta}}^{x,x+1},Y-1) - p_t(\boldsymbol{\eta},Y) \right).
    \end{equation}
    %
    %
    
    Restricted to any given final state $\boldsymbol{\eta}(t) = \boldsymbol{\eta}$, define the moment-generating function of $Y(t)$ by
	\begin{equation}
		G_{t,\gamma}(\boldsymbol{\eta}) := \sum_{Y = 0}^{\infty} p_t(\boldsymbol{\eta}, Y) e^{\gamma Y}, \qquad \gamma \in \mathbb R.
	\end{equation}
    Using \eqref{eq: joint Markov operator} and $Y(0) \equiv 0$, $G_{t,\gamma}$ solves the following initial value problem:
	\begin{equation}\label{eq: evolution of mg function}
		\frac{d}{dt} G_{t,\gamma}(\boldsymbol{\eta}) = L_\gamma  G_{t,\gamma}(\boldsymbol{\eta}), \qquad G_{0,\gamma} = P_0,
	\end{equation}
	where $P_0$ is the initial distribution of $\boldsymbol{\eta}$, and $L_{\gamma}$ is the continuously deformed version of the operator $L$ given by (cf. \eqref{eq: def L operator})
	\begin{equation} \label{eq: def of L_gamma}
		L_\gamma g(\boldsymbol{\eta}) := \sum_{x=1}^N \eta_{x+1} (1-\eta_{x}) \left( e^\gamma g(\boldsymbol{\eta}^{x,x+1}) - g(\boldsymbol{\eta}) \right).
	\end{equation}
Here, the parameter $e^{\gamma}$ weights each jump, counting the number of jumps in a generating function sense.

Similarly to \eqref{eq: def M}, let $M_\gamma$ be the matrix representation of $L_{\gamma}$. It can be obtained from $M$ by multiplying the non-diagonal elements with $e^{\gamma}$ and keeping the diagonal ones. Precisely,
	\begin{equation} \label{eq:def Mgamma}
		M_{\gamma}(\boldsymbol{\eta}, \boldsymbol{\eta}') = 
        \begin{cases}
            e^\gamma \eta_{x+1}(1-\eta_{x}), & \text{if } \boldsymbol{\eta}' = \boldsymbol{\eta}^{x,x+1}, \\
            -\sum_{x=1}^N \eta_{x+1}(1-\eta_{x}), & \text{if }  \boldsymbol{\eta}'= \boldsymbol{\eta}.
        \end{cases}
	\end{equation}
The solution to \eqref{eq: evolution of mg function} is explicitly given by
	\begin{equation} \label{eq: solution to G_t}
    G_{t,\gamma}(\boldsymbol{\eta})  = \left( e^{t M_{\gamma}}  P_0 \right) (\boldsymbol{\eta}).
	\end{equation}
	Summing up all the final states, we obtain the full moment-generating function of $Y(t)$
    \begin{equation}
    \mathbb{E} [e^{\gamma Y(t)}] = \left( e^{t M_{\gamma}} P_0, \mathbf{1} \right),
    \end{equation}
where $(f,g) := \sum_{\boldsymbol{\eta} \in \Omega_{N,p}} f(\boldsymbol{\eta})g(\boldsymbol{\eta})$ denotes the scalar product on $\Omega_{N,p}$, and by $\mathbf{1}$ we denote the vector with all components equal to $1$.

\subsection{Main spectral results and asymptotics} \label{sec: results}

The large-time behaviour of the moment-generating function $G_{t,\gamma}(\boldsymbol{\eta})$, which encodes the statistics of the time-integrated particle current $Y(t)$, is governed by the spectral properties of the tilted matrix $M_{\gamma}$. Namely, its eigenvalues generally control the asymptotic growth rates of quantities related to the current.

More precisely, 
for $j=1$, ..., $|\Omega_{N,p}|$, denote by $\lambda_j(\gamma)$ the eigenvalues of $M_\gamma$, in which $\lambda_1(\gamma)$ has the largest real part.
By Perron--Frobenius theorem, $\lambda_1(\gamma)$ is real and simple.
This eigenvalue determines the exponential growth rate of the moment-generating function via the limit
\[
\lambda_1(\gamma) = \lim_{t \to \infty} \frac{1}{t} \log \mathbb{E}[e^{\gamma Y(t)}].
\]
Thus, the large deviations of $Y(t)$ reduces to $\lambda_1(\gamma)$.
Moreover, let $\lambda_2(\gamma)$ be one of the eigenvalues with the second-largest real part.
Then, $\Re{\lambda_1(\gamma) - \lambda_2(\gamma)}$, the spectral gap, governs the rate of convergence in the limit above, characterizing transient relaxation effects.

Our first result includes some basic properties of $\lambda_1(\gamma)$ as a function of $\gamma$ and gives an a priori estimate.

 \begin{result} \label{result: bounds}
 For any $N$ and $p$, the largest eigenvalue $\lambda_{1}(\gamma)$ is a continuous, strictly increasing function of $\gamma$ and satisfies the following bounds
 \begin{align*}
     e^{\gamma}-1 \leq \lambda_{1}(\gamma) \leq p (e^{\gamma}-1), \qquad \forall \gamma >0,\\
     p(e^{\gamma}-1) \leq \lambda_{1}(\gamma) \leq e^{\gamma}-1 , \qquad \forall \gamma <0.
 \end{align*}
\end{result}

 To further investigate the moment-generating function in long-time, 
 a natural approach is through a spectral decomposition of the tilted generator $M_{\gamma}$. 
 Assuming that $M_\gamma$ is diagonalizable, the full set of eigenvalues allows us to express the moment-generating function as
 \begin{equation} \label{eq: spectral decomposition}
 \mathbb{E} [e^{\gamma Y(t)}] =  \left( e^{t M_{\gamma}} P_0, \mathbf{1} \right)= \sum_{j=1}^{|\Omega_{N,p}|} e^{t \lambda_j } \ a_j(N, p,  P_0).
 \end{equation}
 The coefficients $a_j$ in this decomposition do not depend on time and can be associated with the overlaps with initial distribution $P_0$.
 Isolating the first two terms of the spectral decomposition reveals that the speed of convergence to the asymptotic exponential growth is determined by the spectral gap.
 \begin{equation}
 \frac{\log \mathbb E \left[e^{\gamma Y(t)}\right]} {t}  =  \lambda_1(\gamma) + O(e^{-t (\lambda_1(\gamma)-\lambda_2(\gamma))} t^{-1} ).
 \end{equation}

\begin{remark}
    A significant challenge that remains open is to rigorously prove that $M_\gamma$ is diagonalizable, or even stronger, diagonalizable with the Bethe ansatz introduced below in Section \ref{sec:calc-bethe-equat}. See also Remark \ref{rem: iwaomotegi}.
\end{remark}

In this paper, we compute the largest eigenvalue and the spectral gap in the \emph{thermodynamic limit} $N\to\infty$ with particle density $\rho= p/N$ fixed. 
A summary of the main results is presented below. For precise formulations and proofs, see Assumptions \ref{ass: gammapos} and \ref{ass: gammaneg}, as well as Theorem \ref{thm: asym+} and Proposition \ref{prop: CN(gamma) asympt serie}.

 
\begin{result} \label{result: lambda1}
In the thermodynamic limit, the largest eigenvalue experiences a phase transition from the inactive phase of $\gamma<0$ to the active phase of $\gamma > 0$:
\begin{equation}
 \lambda_1(\gamma) = \begin{cases}
 N\,\Lambda(\gamma, \rho) +O(1), & \gamma>0, \\ 
 -1 + (e^{N \gamma \rho} + e^{N\gamma (1-\rho)})\big[1 + o(1)\big], & \gamma < 0.
 \end{cases}
\end{equation}
where $\Lambda(\gamma, \rho) \in \mathbb{R}_{>0}$ is a model-dependent constant defined in \eqref{def LambdaCoef}.
\end{result}

\begin{result} \label{result: sp gap}
In the thermodynamic limit, the spectral gap experiences a phase transition from the metastable phase of $\gamma < 0$ to the fast-relaxing phase of $\gamma>0$:
 \begin{equation}
\lambda_1(\gamma) -\lambda_2(\gamma)  = \begin{cases}
 N^{-1} \big[g(\gamma, \rho) +o(1)\big], & \gamma>0, \\ 
 o(e^{\gamma N \rho}) + o(e^{\gamma N (1-\rho)}),& \gamma < 0.
 \end{cases}
\end{equation}
where $g(\gamma, \rho) \in \mathbb{C}$ is a model-dependent constant with positive real part defined in \eqref{def gCoef}.
\end{result}

\begin{remark}
    The the exact formulas for the leading order of $\lambda_1(\gamma)$ (see Theorem \ref{thm: asym+} and Proposition \ref{prop: CN(gamma) asympt serie}) agrees with Eq.\,(53) and the series expansion proceeding Eq.\,(57) and the series expansions proceeding Eq.\,(25) of \cite{1999Derrida1Appert}.
    Moreover, as a function of $\gamma$, for each fixed $\rho$ we have
    \begin{equation}
        \Lambda(\gamma, \rho) =
        \begin{cases}
        e^{\gamma} \pi^{-1} \sin(\pi \rho) - \rho(1-\rho) + o(1), & \gamma \rightarrow +\infty,\\
        \gamma  \rho(1-\rho) + o(\gamma), & \gamma \rightarrow 0+.
    \end{cases}
    \end{equation}
    This agrees with the asymptotics predicted 
    in Eq.\,(14) and (15) of \cite{derrida1998exactLDfunction}. 
    
    Similarly, for $g(\gamma,\rho)$ we have
    \begin{equation}
        \Re{g(\gamma, \rho)} = 
        \begin{cases}
        2e^{\gamma}\pi \sin(\pi \rho) + O(1), & \gamma \rightarrow +\infty,\\
        4 \pi^{\frac 43} 3^{-\frac23} \gamma^{\frac 13} \left(\rho(1-\rho)\right)^{\frac23} +O(\gamma^{\frac 23}), & \gamma \rightarrow 0+.
        \end{cases}
    \end{equation}
    In particular, the leading order for $\gamma \to +\infty$ agrees with Eq.\,(74) of \cite{2010PopkovSchutzSimon}.
\end{remark}

We would like to emphasize that in the active regime corresponding to $\gamma>0$, the moment-generating function exponentially biases the dynamics towards trajectories with large particle currents. As the weighting factor $e^{\gamma}$ amplifies the contribution of histories where many particles jump, the system is promoted to sustain a high current flow and exhibits much faster relaxation timescale $t \sim N$.

In contrast, for $\gamma < 0$, the system remains in the inactive (or metastable) phase, where trajectories with large currents are exponentially suppressed by the factor $e^{\gamma}$. This effectively biases the system towards configurations exhibiting low activity, such as jammed or blocked states. 
Hence, convergence towards steady state can take exponentially long time.

\section{Spectral Properties  and Bethe Ansatz
}  \label{sec: sp properties}
\subsection{Bounds and SCGF qualitative behaviour}
     In this part, we analyze the spectral properties of the matrix $M_{\gamma}$, proving Result \ref{result: bounds}. 
    
	A key feature of $M_{\gamma}$ is that it can be shifted into a non-negative matrix by adding a sufficiently large constant multiple of the identity. More precisely, for any $\kappa \geq p$, the shifted matrix
$
M_{\gamma} + \kappa Id
$
has non-negative entries. Importantly, this shifted matrix shares the same eigenvectors as $M_{\gamma}$, with all eigenvalues simply shifted by $\kappa$.

In particular, taking $\kappa = p$ leads to a non-negative, irreducible, and aperiodic matrix. By the Perron–Frobenius theorem, this matrix has a unique simple dominant eigenvalue and a strictly positive right eigenvector (unique up to scaling). Translating back, the original matrix $M_{\gamma}$ thus has a largest eigenvalue $\lambda_1(\gamma) > -p$, corresponding to the same positive eigenvector.

The entries of $M_{\gamma}$ depend analytically on the parameter $\gamma$, which ensures that its eigenvalues and eigenvectors vary analytically with $\gamma$. In particular, the largest eigenvalue $\lambda_1(\gamma)$ varies continuously with $\gamma$, remaining simple for all $\gamma$, and satisfies the normalization condition $\lambda_1(0) = 0$ when $M_{\gamma}$ reduces to the original generator $M$.

It is also easy to see from the Perron--Frobenius theorem that $M_{\gamma}$ has the monotonicity property: if $\gamma_1 > \gamma_2$, then for the matrices $M_{\gamma_1} \geq M_{\gamma_2}$ (where the inequality is entrywise), we have $\lambda_1(\gamma_1) \geq \lambda_1(\gamma_2)$.
   
    \begin{proof}[Proof of Result \ref{result: bounds}]
    Let $\phi$ be the principal right eigenvector of $M_{\gamma}$ such that
    \begin{equation} \label{eq: master on components}
			(M_{\gamma} \phi)(\boldsymbol{\eta}) =
			\sum_{\boldsymbol{\eta}': \boldsymbol{\eta}' \rightarrow \boldsymbol{\eta}} \left( e^{\gamma} \phi (\boldsymbol{\eta}')
			- \phi(\boldsymbol{\eta}) \right)
			= \lambda_{1}(\gamma) \phi(\boldsymbol{\eta}).
		\end{equation}

We analyze this equation to obtain bounds on $\lambda_1(\gamma)$ by considering extremal values of $\phi(\boldsymbol{\eta})$. 
Looking at the smallest component $\beta := \min_{\boldsymbol{\eta}} \phi(\boldsymbol{\eta})$, $\beta > 0 $ of an eigenvector $\phi$, we have
		\begin{equation} \label{eq: master for beta}
			\sum_{\boldsymbol{\eta}': \boldsymbol{\eta}' \rightarrow \boldsymbol{\eta}_{min}} \left( e^{\gamma} \phi (\boldsymbol{\eta}')
			- \beta \right)  = \lambda_{1}(\gamma) \beta.
		\end{equation}
For $\gamma>0$, every term in the sum is positive and the number of terms varies from $1$ to $p$, therefore, $\lambda_1(\gamma)\geq e^{\gamma }-1$, while for $\gamma<0$, every term in the sum is negative, therefore, $\lambda_1(\gamma) \leq e^{\gamma }-1$.

        Similarly, for the largest component $\alpha := \max_{\boldsymbol{\eta}} \phi(\boldsymbol{\eta})$, $ \alpha >0$, we obtain
		\begin{equation}\label{eq: master for alpha}
			\sum_{\boldsymbol{\eta}': \boldsymbol{\eta}' \rightarrow \boldsymbol{\eta}_{max}} \left( e^{\gamma} \phi (\boldsymbol{\eta}')
			- \alpha \right)  = \lambda_{1}(\gamma) \alpha.
		\end{equation}
		
        For $\gamma>0$, every term in the sum is positive, therefore, $\lambda_1(\gamma) \leq p( e^{\gamma }-1)$. For $\gamma<0$, every term in the sum is negative, therefore, $\lambda_1(\gamma) \geq p(e^{\gamma }-1)$.
        \end{proof}
        
	\begin{figure}[ht]
		\centering
			\includegraphics[width=1\linewidth]{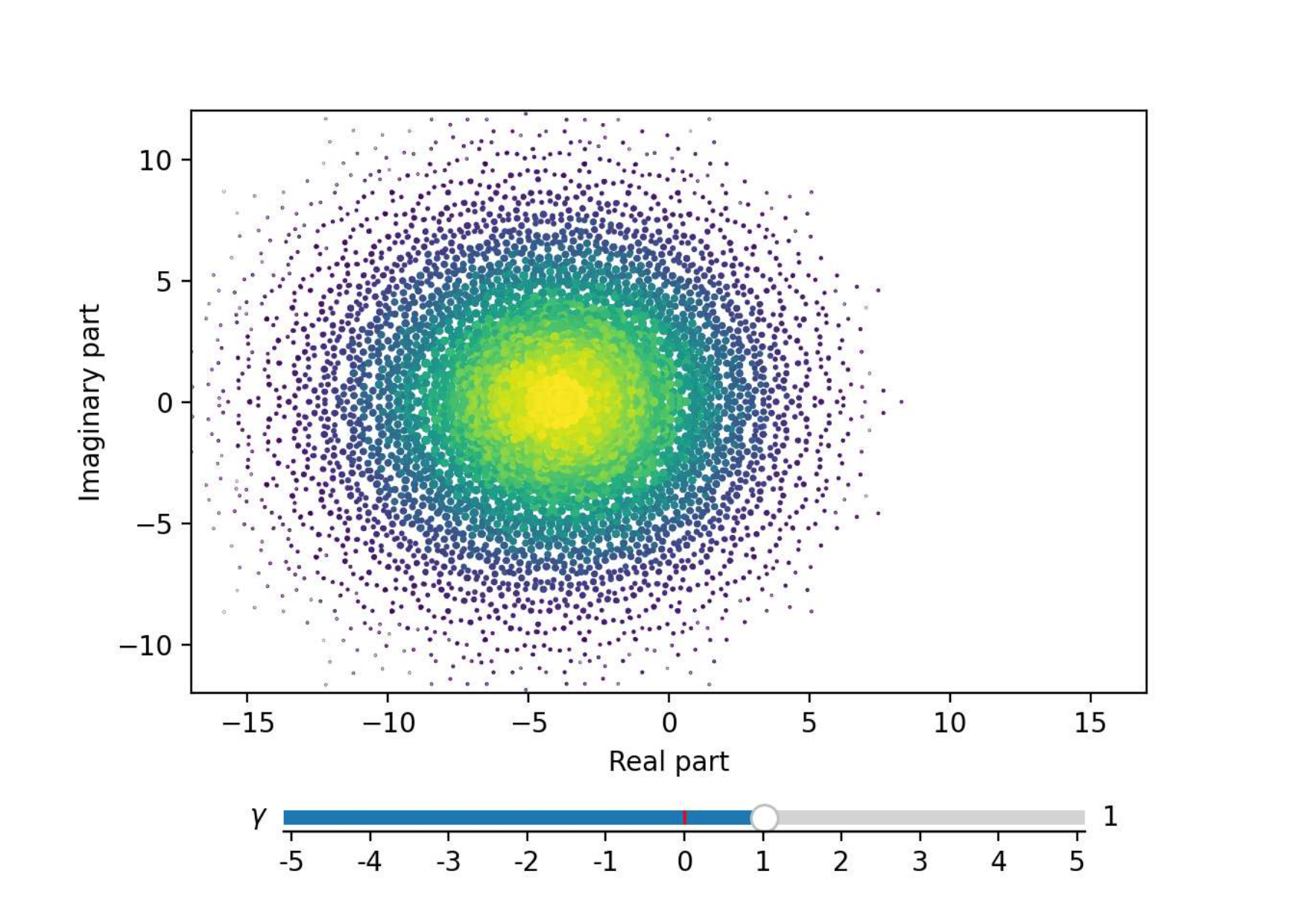}
		\caption{Eigenvalues of the operator $M_{\gamma}$ for positive $\gamma = 1$, $N= 18$, $p=6$ plotted on the complex plane. The colour reflects density: lighter colour indicates higher eigenvalue density.}
		\label{fig: eigvals pos gamma}
	\end{figure}
	Furthermore, the numerical calculations reveal that the eigenvalues  $M_{\gamma}$ exhibit qualitatively different patterns on the complex plane. For the positive values of parameter $\gamma$, the eigenvalues form a captivating rounded cloud with an intricate pattern. In contrast, for negative values of $\gamma$ (especially for quite large values of $N$), we observe several clusters of eigenvalues near the points $-1,-2,-3, \dots$, along with a compact tail extending along the negative real axis.
	This observation is illustrated in Figures \ref{fig: eigvals pos gamma} and \ref{fig: eigvals neg gamma}.
	
	\begin{figure}[ht]
		\centering
		\includegraphics[width=1\linewidth]{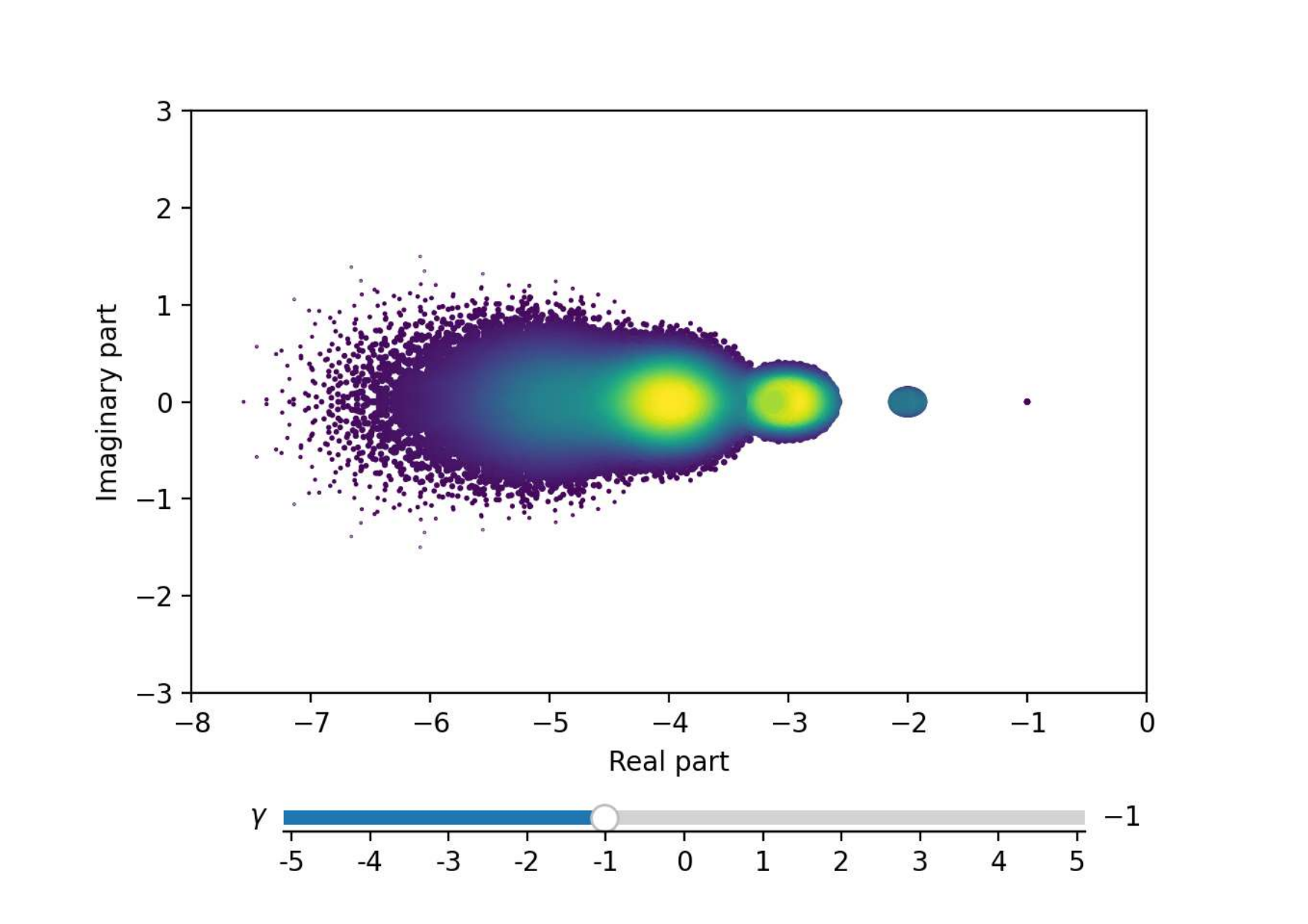}
		\caption{Eigenvalues of the operator $M_{\gamma}$ for negative $\gamma = -1$, $N= 18$, $p=6$ plotted on the complex plane. The colour reflects density: lighter colour indicates higher eigenvalue density.}
		\label{fig: eigvals neg gamma}
	\end{figure}

Additionally, the numerical analysis indicates that the behaviour of the spectral gap in the thermodynamic limit differs qualitatively depending on whether the parameter $\gamma$ is positive or negative. For positive values of $\gamma$,  as the system size grows, we find that both the largest eigenvalue and the next to the largest eigenvalue increase in real part. Conversely, for negative $\gamma$, as we approach the thermodynamic limit, both eigenvalues converge towards the value of $-1$.

\subsection{Bethe Ansatz } \label{sec:calc-bethe-equat}
	
    In this section, we fix $N$, $p\in\mathbb{N}$ and some $\gamma\in\mathbb{R}$ and study the eigenvalues of $M_{\gamma}$ using the \emph{coordinate Bethe ansatz}.
    
    Let $\lambda(\gamma) = \lambda(\gamma;N)$ be an eigenvalue of $M_\gamma$ and $\phi$ be a corresponding right eigenvector.
    Any configuration $\boldsymbol{\eta} \in \Omega_{N,p}$ can be equivalently described by the ordered particle positions $(x_1,x_2,\ldots,x_p)$, satisfying $1 \leq x_j < x_{j+1} \leq N$ and $\eta_{x_j}=1$ for all $j$. In the coordinate Bethe ansatz, the eigenvector $\phi$ is presented as a linear combination of plane waves
\begin{equation}
	\label{eq: BA}
	\phi(x_1, \dots, x_p) :=
	\sum_{\sigma \in S_p} A_{\sigma} z_{\sigma(1)}^{x_1} \dots z_{\sigma(p)}^{x_p},
\end{equation}
where $z_1, \dots, z_p$ are non-zero complex numbers, and the sum is over all permutations $\sigma$ in the symmetric group $S_p$ of degree $p$. Substituting \eqref{eq: BA} into the eigenvector equation $M_{\gamma} \phi = \lambda(\gamma) \phi$ yields
	\begin{equation} \label{eq: eigvalue}
    \lambda(\gamma)=e^\gamma \sum_{j=1}^p \frac{1}{z_j} - p,
	\end{equation}
and $z_1$, \dots, $z_p$ satisfy the following system of Bethe ansatz equations (BAE) derived in \cite{derrida1998exactLDfunction}
    \begin{equation} \label{eq: BAE}
		z_k^{N}	= (-1)^{p-1} \prod_{j=1}^{p} 
		\frac{e^\gamma - z_k}{e^\gamma - z_{j}}, \qquad k = 1, \dots, p.
	\end{equation}
    Define the left translation operator $T$ acting by shifting all particle coordinates by one site to the left. This operator permutes the components of an eigenvector according to
    $$
    T\phi(x_1,\ldots,x_p) := \phi(x_1-1,\ldots,x_p-1).
    $$
    From \eqref{eq: BA}, $\phi$ is also an eigenvector of $T$ with eigenvalue $\prod_{j=1}^p z_j$, i.e., 
    $$T \phi = (\prod_{j=1}^p z_j) \phi.$$ 
    Periodic boundary condition implies $T^N\phi = \phi$, which leads to the quantization condition.
    \begin{equation} \label{def momentum}
		\prod_{j=1}^p z_j = e^{\frac{2 \pi i}{N} m} \quad \text{for some }m \in\{0,\dots N-1\}.
	\end{equation}
   The total momentum is quantized in units of $2 \pi/N$.
    
    \begin{remark}
    Equation \eqref{def momentum} can alternatively be derived by multiplying all $p$ equations in \eqref{eq: BAE}, whose right-hand sides reduce each other. Therefore, if we take the $N$th power root of both parts, we obtain a root of unity on the right-hand side.
    \end{remark}

    \subsection{Bethe equations decoupling property} \label{decoupling}
Following \cite{GolinelliMallick2005spectral}, we introduce a new system of coordinates 
$$
Z_j := 2 e^{\gamma}z_j^{-1} - 1, \quad j=1,\ldots,p.
$$
In terms of the $Z_j$, the Bethe ansatz equations \eqref{eq: BAE} become
\begin{equation}  \label{eq: BAE in Zk}
	(1-Z_k)^p(1+Z_k)^{N-p} = -2^N e^{\gamma N} \prod_{j=1}^p \frac{Z_j -1 }{Z_j+1},  \qquad k= 1, \dots, p.
\end{equation}
Denote by $C_{N,\gamma}:=C_{N,\gamma}(Z_1,\ldots,Z_p)$ the right-hand side of each equation, which is independent of index $k$.
\begin{equation} \label{eq: def C-N-gamma}
    C_{N,\gamma}(Z_1,\ldots,Z_p) := -2^N e^{\gamma N} \prod_{j=1}^p \frac{Z_j -1 }{Z_j+1}.
\end{equation}
The periodic boundary condition \eqref{def momentum} translates to
    \begin{equation} \label{eq: momentum in Zk}
        \prod_{j=1}^p (1+Z_j) = 2^pe^{\gamma p}e^{-\frac{2\pi i}{N} m} \quad \text{for some }m \in\{0,\dots N-1\}.
    \end{equation}
The eigenvalue takes a particularly simple form
	\begin{equation} \label{eq: ev via Z}
		\lambda(\gamma)= \frac{1}{2}\sum_{j=1}^p (Z_j-1).
	\end{equation}

    To analyze the system  \eqref{eq: BAE in Zk}, we adopt the approach  of \cite{GolinelliMallick2005spectral}.
    For each $C\in\mathbb{C}$, denote by $u_1$, \dots, $u_N$ the $N$ complex roots of the polynomial equation
	  \begin{equation} \label{eq: pol_eq}
	  	(1-u)^p(1+u)^{N-p} = C.
	  \end{equation}
    The rigorous definition of $u_j$ as a function of $C$ is stated later in Section \ref{geo-bethe}.
    To construct a particular solution of \eqref{eq: BAE in Zk}, we select a subset
    \begin{equation} \label{choice}
    A \subseteq \{1, \dots, N\}, \quad \text{such that} \quad |A| = p,
    \end{equation}
    and impose the consistency condition arising from \eqref{eq: BAE in Zk}
    \begin{equation}  \label{eq: consistency}
	  2^N e^{\gamma N} \prod_{j \in A} \frac{u_j(C_{N,\gamma}) - 1}{u_j(C_{N,\gamma}) + 1} + C_{N, \gamma} =0.
	\end{equation}
    Let $l: \{1,\ldots,p\} \rightarrow A$ be the strictly increasing bijection enumerating the elements of $A$. Then, the corresponding solution is 
    \begin{equation} \label{eq: sol-BAE}
    Z_j := u_{l(j)}(C_{N,\gamma}), \qquad j=1,2,\ldots,p.
    \end{equation}
    The corresponding eigenvalue is \eqref{eq: ev via Z}.

    \begin{remark} \label{rem: iwaomotegi}
     Recently, S.~Iwao and K.~Motegi (see Corollary 2.8 in \cite{2025IwaoMotegiCompleteness}) have rigorously proved that the Bethe ansatz equations possess exactly $\binom{N}{p}$ solutions counted with multiplicities. Therefore, there are enough solutions to potentially generate a full set of eigenvalues for the tilted Markov generator. However, this does not imply that the Bethe ansatz produces a basis of eigenvectors for the entire space. Moreover, it does not imply that every choice of $p$ Bethe roots yields a unique eigenvector. 
    \end{remark}
%
%
%

      \subsection{Geometry of Bethe roots} \label{geo-bethe}
      In this subsection, we precisely define the solution maps for equation \eqref{eq: pol_eq} and analyze the geometric distribution of its roots in the complex plane.
      Fix positive integers $N$ and $p$, and define the density $\rho:=p/N$.  We parametrize the right-hand side of \eqref{eq: pol_eq} as $C = -(r e^{i \theta})^p$, where
      \begin{equation}
         r := |C|^{\frac 1p}, \qquad \theta := \frac{1}{p} \arg (-C),
      \end{equation}
      with $\arg(\cdot)$ denoting the \emph{principal value} of a complex number in $(-\pi,\pi]$.
      Hence, $\theta \in (-\frac{\pi}{p}, \frac{\pi}{p}]$.

      Since \eqref{eq: pol_eq} is a polynomial of degree $N$, it has $N$ distinct roots for all $r \notin \{0, r_\mathrm{cr}\}$, where
      \begin{equation} \label{eq: r-cr}
      r_\mathrm{cr} := 2^{\frac1\rho}\rho(1-\rho)^{\frac1\rho-1}.
      \end{equation}
      
      To label the roots uniquely, we define a function $F:\mathbb{C} \rightarrow (-N,N]$ that associates to each root a unique value based on its argument
      $$
      F(u) := \frac{p}{\pi}\arg(u - 1) + \frac{N-p}{\pi}\arg(u + 1).
      $$
      If $u$ is a root of \eqref{eq: pol_eq}, then
      $$
      F(u) = \frac{\arg C}{\pi} + p + 2j \quad \text{for some} \quad j \in \mathbb{Z},
      $$
      allowing us to index roots by an interval containing $F(u)$. 
      
      To be precise, let $\tau: \mathbb{R} \to \mathbb{R}/(2N\mathbb{Z}) \cong (-N, N]$ be the quotient map and let intervals be 
      $$I_j := \tau\big( (-p + 2(j-1), -p + 2j] \big) \subset (-N, N], \quad j=1,\dots, N. $$
      If $\theta \not= 0$, i.e. $C \notin \mathbb{R}_{\ge0}$, we define $u_j=u_j(C)$ to be the unique solution to \eqref{eq: pol_eq} such that $F(u_j) \in I_j$. If $C\in\mathbb{R}_{>0}$, we define $u_j(C)$ as a continuous extension from $\Im(C) < 0$
      $$
      u_j(C) := \lim_{\delta \uparrow 0} u_j(Ce^{i \delta}).
      $$

\begin{example} \label{half-fill}
In the case of half-filling $N=2p$, the root maps $u_j=u_j(C)$ defined above are explicitly given by
$$
\begin{aligned}
    &u_j(C) = \sqrt{1-(-C)^{\frac{1}{p}}e^{i\frac{(2j-1)\pi}{p}}}, \qquad \text{for } j=1, \ldots, p,
\end{aligned}
$$
and $u_j=-u_{j-p}$ for $j=p+1$, \dots, $2p$.
\end{example}

We further illustrate the geometrical distribution of the roots of the equation  \eqref{eq: pol_eq}.
    All roots $u_j$ are located on  \emph{Cassini ovals} defined by
    \begin{equation} \label{eq: Cassini contours}
    \mathcal{C}(r) := \big\{u \in \mathbb{C}; \ |1-u|^{\rho}|1+u|^{1-\rho}= r^{\rho}\big\}.
	\end{equation}
    This curve has foci at $\pm 1$ and undergoes topological changes at a focal distance $r = r_\mathrm{cr}$ defined in \eqref{eq: r-cr}. Below, we summarize the topology of the curve and the distribution of the roots.
    \begin{itemize}
        \item[\romannumeral1.] If $r=0$, the curve $\mathcal{C}(r)$ degenerates, containing a root $u=1$ with multiplicity $p$ and another root $u=-1$ with multiplicity $N-p$.
        \item[\romannumeral2.] If $r\in(0,r_\mathrm{cr})$, there are two disjoint ovals in $\mathcal{C}(r)$, centred at $1$ and $-1$, respectively. The right oval contains $p$ distinct roots $u_1$, \dots, $u_p$ while the left contains the remaining $N-p$ distinct roots.
        \item[\romannumeral3.] If $r=r_\mathrm{cr}$, $\mathcal{C}(r)$ forms a deformed lemniscate of Bernoulli with a double point at $1-2\rho$. In particular, if $C=r_\mathrm{cr}^p$, i.e. $\theta=0$, the point $u_p=u_N=1-2\rho$ is a double root. If $\theta \neq 0$, the roots $u_1, \dots, u_p$ lie on the right lobe of the lemniscate, and $u_{p+1}, \dots, u_N$ on the other.
        \item[\romannumeral4.] If $r\in(r_\mathrm{cr},+\infty)$, $\mathcal{C}(r)$ forms a single oval that encompasses $\pm1$. All $N$ roots of the polynomial equation \eqref{eq: pol_eq} are distinct and lie on this curve.
    \end{itemize}
    
    For an illustration of the roots of the equation \eqref{eq: pol_eq} on Cassini ovals, see Figure \ref{fig: Asymmetry of Cassini contours}. This figure compares the case of half-density Cassini ovals that are symmetric to the origin with the case of arbitrary density $\rho$ with asymmetric Cassini ovals. For an example of labelling the roots, refer to Figure \ref{fig:  Cassini contour with sep curve}, the right column.

    Furthermore, the first $p$ roots $u_1$, \dots, $u_p$ are always contained in the following region:
    \begin{equation} \label{eq: domain}
        D_+ := \big\{ u\in\mathbb{C};\, \rho |\arg(1-u)| \ge (1-\rho) |\arg(1+u)| \big\}.
    \end{equation}
    \begin{example}
        In the case of half-filling $N=2p$, we have $r_\mathrm{cr}=1$ and $D_+ = \{u \in\mathbb{C} \mid \ \Re u\ge0\}$.
    \end{example}

    \begin{remark}
        If $r \ge r_\mathrm{cr}$ and $\theta=0$, the roots $u_p$ and $u_N$ would appear at the boundary $\partial D_+$. Otherwise, $\{u_1, \ldots, u_p\} \subset D_+^\circ$.
    \end{remark}

      \begin{figure}
        \center{\includegraphics[width=1\linewidth]{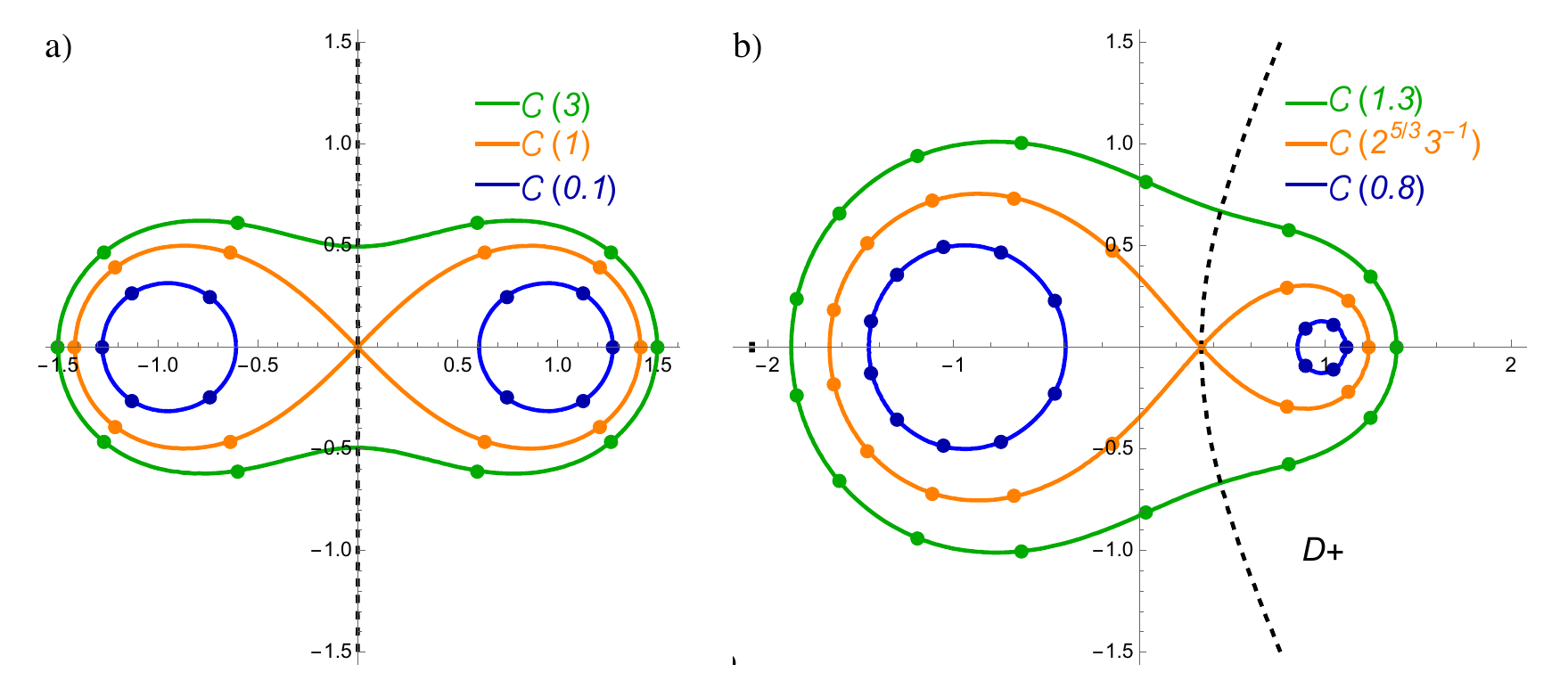}}
		\caption{a) Cassini ovals for (a) $N=10$, $\rho = 1/2$ (left) and for (b) $N=15,$ $p=5$, $\rho = 1/3$ (right). The green single loop corresponds to $r > r_{\mathrm{cr}},$ the orange deformed lemniscate of Bernoulli corresponds to $r = r_{\mathrm{cr}}$, and the two blue ovals correspond to $r <r_{\mathrm{cr}}$. The $N$ solutions of \eqref{eq: pol_eq} are marked by the dots. The dashed line is the boundary of the domain $D_+$.
		}
		\label{fig: Asymmetry of Cassini contours}
    \end{figure}

\subsection{Assumptions on Bethe root selection} \label{subsec: ass}
    
    Recall the strategy introduced in Section \ref{decoupling}: to find a particular solution to equation \eqref{eq: BAE in Zk}, one selects the $p$ solution maps from a total of $N$. This selection is specified by a subset of indexes $A$ \eqref{choice}, or, equivalently, by a choice function $l$. Once $l$ is fixed, the consistency equation \eqref{eq: consistency} determines $C_{N,\gamma}$, and the solution set $Z_1, \dots, Z_p$ is given by \eqref{eq: sol-BAE}. In this subsection, we focus on assumptions regarding the selection of Bethe roots corresponding to the largest and second-largest eigenvalues. 

 \begin{proposition} \label{prop: Creal}
        Let $\{Z_1,\ldots,Z_p\}$ be a solution to \eqref{eq: BAE in Zk} that results in the largest eigenvalue $\lambda_1(\gamma)$ of $M_\gamma$. Then, the corresponding $C_{N,\gamma}$ is real.
	\end{proposition}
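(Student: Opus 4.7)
The plan is to exploit two symmetries: the reality of the coefficients in the Bethe ansatz equations (since $\gamma\in\mathbb{R}$) and the reality and simplicity of the principal eigenvalue $\lambda_1(\gamma)$ supplied by Perron--Frobenius. Because \eqref{eq: BAE in Zk} and \eqref{eq: momentum in Zk} have real coefficients, if $(Z_1,\dots,Z_p)$ is a solution then so is the entrywise conjugate $(\bar Z_1,\dots,\bar Z_p)$, and the explicit formulas \eqref{eq: def C-N-gamma} and \eqref{eq: ev via Z} give $C_{N,\gamma}(\bar Z_1,\dots,\bar Z_p)=\overline{C_{N,\gamma}(Z_1,\dots,Z_p)}$ and the eigenvalue transforms as $\lambda(\gamma)\mapsto \overline{\lambda(\gamma)}$.

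First I would record these elementary conjugation identities. Next, since Perron--Frobenius (as already used in the proof of Result \ref{result: bounds}) guarantees that the $\lambda_1(\gamma)$-eigenspace is one-dimensional and spanned by a strictly positive vector, the Bethe vectors built from $\{z_j\}=\{2e^\gamma/(Z_j+1)\}$ and from its conjugate $\{\bar z_j\}$ via \eqref{eq: BA} must be proportional in the one-dimensional $\lambda_1(\gamma)$-eigenspace. Finally, I would argue that proportional Bethe vectors force the underlying multisets to coincide, so $\{Z_1,\dots,Z_p\}=\{\bar Z_1,\dots,\bar Z_p\}$ as multisets, and the symmetric function $C_{N,\gamma}$ on the right-hand side of \eqref{eq: def C-N-gamma} equals its own conjugate.

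The main obstacle is this last step, in view of the caveat in Remark \ref{rem: iwaomotegi}: distinct multisets of Bethe roots could in principle yield the same, or even vanishing, eigenvector. I would address this by invoking the strict positivity of the Perron eigenvector, which excludes vanishing Bethe vectors on the $\lambda_1$-branch. Then, for distinct $z_j$, the $p!$ plane-wave monomials $\prod_j z_{\sigma(j)}^{x_j}$ appearing in \eqref{eq: BA} are linearly independent as functions on $\Omega_{N,p}$, which allows the multiset to be recovered from the eigenvector up to permutation; possible coincidences $z_i=z_j$ within the original set are harmless, since the required conjugation symmetry is automatic within a repeated block. Together with the Perron--Frobenius pairing argument above, this yields the multiset equality and hence $C_{N,\gamma}\in\mathbb{R}$. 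A complementary sanity check is the base case $\gamma=0$, where the trivial solution $Z_j\equiv 1$ yields $C_{N,0}=0\in\mathbb{R}$, so reality of $C_{N,\gamma}$ is consistent with the analytic $\gamma$-deformation of the Perron branch.
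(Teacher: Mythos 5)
Your proof takes essentially the same route as the paper's: the Bethe equations \eqref{eq: BAE in Zk} have real coefficients, so the conjugate root multiset is again a solution, and the reality and simplicity of $\lambda_1(\gamma)$ from Perron--Frobenius then forces the multiset to coincide with its conjugate, whence $C_{N,\gamma}\in\mathbb{R}$ by \eqref{eq: def C-N-gamma}. The paper dispatches the final step in a single sentence; you add a plane-wave linear-independence argument to explain why proportional Bethe vectors force equal root multisets. Be aware that this supplementary step is exactly the root-to-eigenvector injectivity that Remark~\ref{rem: iwaomotegi} flags as not established: the monomials $\prod_j z_{\sigma(j)}^{x_j}$ are only evaluated on the ordered domain $1\le x_1<\cdots<x_p\le N$ and enter through the scattering coefficients $A_\sigma$, so linear independence of the bare monomials does not on its own let you recover the multiset from the eigenvector. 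Your patch is plausible but not fully rigorous as sketched; it neither helps nor hurts the core argument, which is the same as the paper's.
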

    
    \begin{proof}
        If a solution $\{Z_1,\ldots, Z_p\}$ is invariant under complex conjugation, we find $C_{N,\gamma}$ from \eqref{eq: def C-N-gamma} to be real. Assume the set $\{Z_1,\ldots, Z_p\}$ is not invariant under complex conjugation. Then, its complex conjugate set $\{\overline{Z}_1,\ldots,\overline{Z}_p\}$ is another solution to \eqref{eq: BAE in Zk} yielding a complex conjugate eigenvalue $\overline{\lambda_1(\gamma)}$. Since the largest eigenvalue is real and simple by the Perron-Frobenius theorem, this contradicts the assumption.
    \end{proof}

For finite $N$, the largest eigenvalue $\lambda_1(\gamma)$ of $M_{\gamma}$ is simple according to the Perron-Frobenius theorem, and therefore is an analytic function of $\gamma$. At $\gamma=0$, $\lambda_1(0)=0$ corresponds to the degenerate case of the focal radius $|C_{N,\gamma}|=0$, where the rightmost $p$ roots are $Z_j=u_j(0)=1$ for $j=1, \dots, p$.
Consequently, there exists some positive $\epsilon_N$, such that for $|\gamma|<\epsilon_N$, the largest eigenvalue $\lambda_1(\gamma)$ is determined by the  focal radius $|C_{N,\gamma}|<r_\mathrm{cr}^p$ and the root selection given by
\begin{equation} \label{choice1}
    A=\{1,\ldots,p\}.
\end{equation}
This choice corresponds to selecting the $p$ roots with the largest real part on the right oval of the Cassini curve, yielding the largest eigenvalue.

In the seminal paper \cite{GwaSpohn1992spgap} by L.H.~Gwa and H.~Spohn, the authors conjectured that $\lambda_2(\gamma)$ is given by the minimal modification of \eqref{choice1}, i.e. keeping the $p-1$ roots in \eqref{choice1} with the largest real part, and replacing the remaining one by its nearest neighbour:
\begin{equation} \label{choice2}
    A = \{1,\ldots,p-1,p+1\} \quad \text{or} \quad \{2,\ldots,p,N\}.
\end{equation}
\begin{remark} 
    Since $\theta$ belongs to an interval $\theta \in (-\frac{\pi}{p}, \frac{\pi}{p}]$ shrinking as $p$ grows, at least asymptotically, alternative choices  
    \begin{align}
A &= \{ 1, \ldots, p-1, N\} \label{eq:first-choice} \\
&\quad \text{or} \quad \{2,\ldots, p, p+1\} \label{eq:second-choice}
\end{align}
    produce eigenvalues with the same real part as those in \eqref{choice2}, potentially corresponding to complex conjugate pairs.
\end{remark} 

Although these root selections hold near $\gamma$ sufficiently close to $0$, for other values of $\gamma$ eigenvalue crossing or merges can occur, causing the solution to \eqref{eq: BAE in Zk} to switch to a different choice function $l$. The following proposition provides a simple counterexample for negative $\gamma$.

\begin{proposition} \label{prop: CCnoRoots}
    If $\gamma < \log(1-\rho)$, the consistency equation \eqref{eq: consistency} has no solution under the choice \eqref{choice1}.
\end{proposition}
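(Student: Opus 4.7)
The plan is to extract from the consistency equation \eqref{eq: consistency} a scalar identity involving only the moduli $|1+u_j|$, and then establish a pointwise geometric lower bound on $|1+u|$ over $\mathcal{C}(r) \cap \overline{D_+}$.

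First, taking the modulus of \eqref{eq: consistency} and using that each selected root lies on the Cassini oval (so $|1-u_j|^p|1+u_j|^{N-p} = |C_{N,\gamma}|$), the ratio appearing in the product simplifies to $|u_j-1|/|u_j+1| = |C_{N,\gamma}|^{1/p}/|1+u_j|^{N/p}$. Cancelling the common factor of $|C_{N,\gamma}|$, which is legitimate since the trivial case $C_{N,\gamma}=0$ collapses all selected roots to $u_j=1$ and fails to define a proper Bethe solution (the plane-wave ansatz with coinciding roots is non-periodic for $\gamma \neq 0$), the consistency equation reduces to the clean identity
$$\prod_{j=1}^p |1+u_j| = (2e^\gamma)^p.$$

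Next, one proves the pointwise bound $|1+u| \geq 2(1-\rho)$ for every $u \in \mathcal{C}(r) \cap \overline{D_+}$, with equality realized only at the pinch $u = 1-2\rho$ at $r = r_\mathrm{cr}$. A Lagrange multiplier computation on $|1+u|^2$ subject to the Cassini constraint forces interior critical points to lie on the real axis: the off-axis equation collapses to $|1+u|^2 = 0$, i.e.\ $u = -1$, which is not on $\mathcal{C}(r)$ for $r>0$. One then enumerates the real intersections in $\overline{D_+}$. For $r \leq r_\mathrm{cr}$ these lie in $[1-2\rho, 1) \cup (1,\infty)$ and satisfy $|1+u| \geq 2(1-\rho)$ with equality only at the pinch. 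For $r > r_\mathrm{cr}$ the single oval intersects $D_+^\circ$ on the real axis only in $(1,\infty)$ where $|1+u| > 2$; on the boundary $\partial D_+ \cap \mathcal{C}(r)$ the identity $\rho\arg(1-u)+(1-\rho)\arg(1+u) = 0$ forces $(1-u)^p(1+u)^{N-p}$ to be real, and continuously tracking the complex-conjugate pair of roots emerging from $1-2\rho$ as $r$ crosses $r_\mathrm{cr}$ shows $|1+u|$ stays $\geq 2(1-\rho)$. Combining the two steps gives $(2e^\gamma)^p = \prod_{j=1}^p |1+u_j| \geq (2(1-\rho))^p$, hence $\gamma \geq \log(1-\rho)$, which is the contrapositive of the claim.

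The main obstacle is the lower bound on $\partial D_+ \cap \mathcal{C}(r)$ in the single-oval regime $r > r_\mathrm{cr}$, since this is the only piece where the minimum of $|1+u|$ could a priori dip below $2(1-\rho)$. The interior Lagrange analysis is routine; what requires care is either a monotonicity argument along the pinch-born branch of complex roots, or an alternative direct parameterization of this part of $\partial D_+$ via the real-valued equation $(1-u)^\rho(1+u)^{1-\rho} = r$ showing that $|1+u|$ is non-decreasing in $r$ along it.
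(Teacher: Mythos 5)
Your argument follows the same two-step skeleton as the paper's proof: (i) reduce the consistency condition to the real identity $\prod_{j=1}^p |1+u_j| = (2e^\gamma)^p$, and (ii) bound each $|1+u_j|$ from below by $2(1-\rho)$ using the geometry of $D_+$. Your route to (i) --- taking moduli of \eqref{eq: consistency} directly and eliminating $|u_j-1|$ via the Cassini constraint --- is a valid and slightly more direct variant of the paper's, which passes through the quantization condition \eqref{eq: momentum in Zk} / \eqref{eq: cc1 simple}. Your treatment of the degenerate case $C=0$ is also fine.

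The gap you flag in step (ii) is genuine and you are right to worry about it: the only place the pointwise bound could fail is on the branch $\partial D_+ \cap \mathcal{C}(r)$ for $r>r_\mathrm{cr}$, i.e.\ at the points $u_*(r)$, $\bar u_*(r)$ of \eqref{eq: def-u*}. Your ``continuously tracking the pinch-born pair'' is a heuristic, not a proof, and your proposal does not close it. (To be fair, the paper's one-line justification --- ``the leftmost point in the domain $D_+$ delivers the minimal value'' --- is equally unsubstantiated; this is the substantive content behind both proofs.) The cleanest way to close the gap is the parameterization you yourself suggest at the end: on $\partial D_+$ with $\Im u>0$, set $\phi := \arg(1+u)$; writing $u = -1 + a e^{i\phi}$ with $a=|1+u|$ and using that $\arg(1-u) = -\tfrac{1-\rho}{\rho}\phi$, the real and imaginary parts of $(1+u)+(1-u)=2$ give
\[
a = \frac{2}{\cos\phi + \sin\phi \cot\!\bigl(\tfrac{1-\rho}{\rho}\phi\bigr)},
\]
which equals $2(1-\rho)$ at $\phi=0$ and whose denominator is strictly decreasing on the admissible range (this is elementary for $\rho=\tfrac12$, where $a = 1/\cos\phi$, and can be checked by differentiation in general). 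Hence $|1+u_*(r)| \ge 2(1-\rho)$ with equality only at $r=r_\mathrm{cr}$, and your step (ii), and with it the whole proof, goes through.

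One minor imprecision: your Lagrange condition does not ``collapse to $|1+u|^2=0$'' --- the correct factorization of the first-order condition is $|1+u|^2\,\Im u = 0$, so either $u=-1$ (excluded) or $\Im u = 0$. You have the right conclusion, just stated slightly backwards.
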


\begin{proof}
Expressing $|u-1|$ from a Cassini curve equation \eqref{eq: Cassini contours}, we simplify the consistency condition \eqref{eq: consistency} to show
\begin{equation}
   (2-2\rho)^p < \min_j |1+Z_j|^p < \prod_{j=1}^p|1+Z_j| = 2^p e^{\gamma p}.
\end{equation}
The leftmost inequality follows from the location of the leftmost point in the domain $D_+$ delivering the minimal possible value for $|1+Z_j|.$
Therefore, if $\gamma < \log(1-\rho)$, there is no solution.
\end{proof}

For the reason stated above, we formulate the following assumptions.

\begin{assumption}[Choice of roots for $\gamma>0$] \label{ass: gammapos}
    Let $\gamma>0$. For sufficiently large $N$, the largest eigenvalue $\lambda_1(\gamma)$ is achieved by the choice in \eqref{choice1}, while the second-largest one $\lambda_2(\gamma)$ is given by the minimal modification \eqref{choice2}.
\end{assumption}

\begin{assumption}[Choice of roots for $\gamma<0$] \label{ass: gammaneg}
    Let $\gamma<0$. For sufficiently large $N$, the largest eigenvalue $\lambda_1(\gamma)$ is achieved by the choice in \eqref{eq:first-choice}, while the second-largest one $\lambda_2(\gamma)$ is given by further modifications
    \begin{equation} \label{choice3}
        \begin{aligned}
            A = \{1,\ldots,p-1,p+2\} \quad &\text{or} \quad \{1,\ldots,p-2,p,p+1\}\\
            \text{or} \quad \{2,\ldots,p,N-1\} \quad &\text{or} \quad \{1,3\ldots,p,N\}.
        \end{aligned}
    \end{equation}
\end{assumption}
\begin{remark}
    Although the choice of the roots in the assumptions above involves different selection rules for positive and negative $\gamma$, the selection procedures are continuous extensions of each other. This is consistent with the results of B. Derrida and C. Appert \cite{1999Derrida1Appert}, where the eigenvalue structure and corresponding root configurations evolve continuously as the parameter $\gamma$ varies.
\end{remark}
The following \emph{a priori} estimate of $C_{N,\gamma}$ corresponding to $\lambda_1(\gamma)$ is a direct consequence of the assumptions above.

\begin{proposition} \label{prop: C}
    If $\gamma>0$, the Bethe roots $\{Z_1,\ldots,Z_p\}$ corresponding to $\lambda_1(\gamma)$ satisfy \eqref{choice1}, $C_{N,\gamma} \in (-\infty, r_\mathrm{cr}^p]$.
    If $\gamma<0$ and a set $\{Z_1,\ldots,Z_p\}$ corresponding to $\lambda_1(\gamma)$ satisfies \eqref{choice2}, $C_{N,\gamma} \in (0,r_\mathrm{cr}^p)$.
\end{proposition}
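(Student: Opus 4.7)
The plan is to split according to the sign of $\gamma$ and, in each case, combine Proposition \ref{prop: Creal} (which gives $C_{N,\gamma}\in\mathbb{R}$) with a sign/geometry analysis of the consistency equation \eqref{eq: consistency}.

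For $\gamma>0$, under Assumption \ref{ass: gammapos} the choice is $A=\{1,\ldots,p\}$, so all $u_j$ lie in $D_+$ (close to $+1$) and, since $C_{N,\gamma}\in\mathbb R$, the set $\{u_1,\ldots,u_p\}$ is invariant under complex conjugation. I expect to establish the stronger statement $C_{N,\gamma}<0$, which implies $C_{N,\gamma}\leq r_\mathrm{cr}^p$ trivially. The key step is a direct sign analysis of
\[
C_{N,\gamma}=-2^N e^{\gamma N}\prod_{j=1}^p\frac{u_j-1}{u_j+1}.
\]
Each complex-conjugate pair among the $u_j$ contributes a positive real factor $|u_j-1|^2/|u_j+1|^2$. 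When $p$ is odd there is an additional real root; for small $\gamma>0$, linearizing \eqref{eq: pol_eq} around $C=0$ yields $u_j-1\approx\epsilon_0 e^{2\pi i(j-1)/p}$ with $\epsilon_0>0$, identifying the real root as $1+\epsilon_0>1$, which contributes another positive factor. Hence the product is positive and $C_{N,\gamma}<0$ for small $\gamma>0$. Since $\lambda_1$ is real analytic in $\gamma$ and $\lambda_1(\gamma)\geq e^\gamma-1>0$ by Result \ref{result: bounds}, $C_{N,\gamma}$ cannot cross zero, so the negativity (and hence the bound) persists for all $\gamma>0$.

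For $\gamma<0$ with the choice \eqref{choice2}, exactly one selected Bethe root (either $u_{p+1}$ or $u_N$) comes from the left lobe of the Cassini curve near $-1$, while the remaining $p-1$ sit on the right lobe near $+1$. The upper bound $C_{N,\gamma}<r_\mathrm{cr}^p$ is then geometric: the existence of two disjoint lobes with the ``left'' root being close to $-1$ requires the focal parameter to satisfy $r<r_\mathrm{cr}$, i.e., $|C_{N,\gamma}|<r_\mathrm{cr}^p$. The strict positivity $C_{N,\gamma}>0$ follows from a sign analysis parallel to the first case: the left-oval factor $(u_N-1)/(u_N+1)\approx-2/(u_N+1)$ is large in magnitude and of opposite sign to the right-oval factors, so $\prod_{j\in A}(u_j-1)/(u_j+1)$ flips sign to negative, and consequently $C_{N,\gamma}>0$.

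The main obstacle will be making the continuation and geometric arguments fully rigorous. In the first case, I need to exclude the possibility that $C_{N,\gamma}$ sneaks across zero via some topological reorganization of the Bethe-root configuration (e.g.\ crossing $r=r_\mathrm{cr}$) as $\gamma$ varies. In the second case, the identification of the indexed root $u_N$ as being geometrically near $-1$ rather than on a single large oval is not automatic from the indexing rule alone, and I would handle this by continuation starting from the threshold $\gamma=\log(1-\rho)$ identified in Proposition \ref{prop: CCnoRoots}, below which the choice \eqref{choice1} becomes infeasible and \eqref{choice2} takes over.
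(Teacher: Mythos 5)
Your route is genuinely different from the paper's. The paper argues directly at each fixed $\gamma$: it assumes $C_{N,\gamma}$ lies in the forbidden region and derives a contradiction with the \emph{simplicity} of $\lambda_1(\gamma)$ (Perron--Frobenius), using the parity count of real roots of \eqref{eq: pol_eq} — for $C$ in the forbidden region the roots organize into conjugate pairs plus a number of real roots of the wrong parity/location, so no conjugation-invariant selection of the $p$ roots prescribed by the choice function is uniquely singled out. No continuation in $\gamma$ is needed. You instead try a sign analysis of the fixed-point equation $C=-2^Ne^{\gamma N}\prod_j(u_j(C)-1)/(u_j(C)+1)$ combined with continuation from a base point.

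The gap is in the base case, and it is a circularity rather than a missing detail. Which of the $u_j(C)$ are real, and whether the real one(s) lie in $(1,\infty)$ or in $(-1,1)$, depends on $\arg(-C)$: for $C<0$ the right-oval phases are $e^{i(2j-1)\pi/p}$ (real root $>1$ when $p$ is odd, factor $(u-1)/(u+1)>0$), while for $C>0$ they are $e^{2ij\pi/p}$ (real root $u_p<1$, factor $<0$). Running your computation in both cases shows that \emph{each} half-line is invariant under the fixed-point map: $C<0$ gives a positive product hence $\Phi(C)<0$, but $C>0$ gives a negative product hence $\Phi(C)>0$. So the sign analysis is self-consistent for both signs and cannot by itself decide which branch carries $\lambda_1$ near $\gamma=0+$; your linearization $u_j-1\approx\epsilon_0 e^{2\pi i(j-1)/p}$ already presupposes $C<0$. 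To fix the base case you would need an independent determination of the sign (e.g.\ matching the perturbative expansion $\lambda_1(\gamma)=\gamma\,p(N-p)/(N-1)+O(\gamma^2)$ against $\lambda_1=\frac12\sum(u_j-1)$, whose leading term in $C$ enters only at order $w^p$ and requires the $p$-th Taylor coefficient of the inverse map $v$), or simply the paper's parity/simplicity argument. The same issue affects the $\gamma<0$ case: your claim that the product ``flips sign'' relies on knowing that exactly one selected root sits on the left lobe near $-1$ and the rest pair up conjugately, which again presupposes $C\in(0,r_{\mathrm{cr}}^p)$; and, as you acknowledge, the continuation steps additionally require continuity of $C_{N,\gamma}$ in $\gamma$ under a fixed choice function, which the paper explicitly warns can fail through eigenvalue crossings. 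Note also that the stronger claim $C_{N,\gamma}<0$ for all $\gamma>0$ is more than the proposition asserts (it allows $C\in(0,r_{\mathrm{cr}}^p]$), so even granting it would not be a drop-in replacement for the paper's statement, and your proof of the weaker bound passes entirely through the unproved stronger one.
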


\begin{proof}
\textit{Case: $\gamma > 0.$} 
   We proceed by contradiction.
Assume $C_{N,\gamma} > r_{\text{cr}}^p$. For the polynomial equation \eqref{eq: pol_eq}:
    \begin{itemize}
        \item $p$ odd: No real positive root exists; all roots are complex conjugate pairs.
        \item $p$ even: One real root $>1$ exists; the others are complex conjugate pairs.
    \end{itemize}
    In both cases, the $p$ roots with the largest real parts are not uniquely determined, contradicting the simplicity of the largest eigenvalue. Thus, $C_{N,\gamma} \leq r_{\text{cr}}^p$.

\textit{Case: $\gamma < 0.$} Assume $C_{N,\gamma} \notin (0, r_{\text{cr}}^p)$. If $C_{N,\gamma} < 0$
        \begin{itemize}
            \item $p$ odd: A real root $>1$ exists; others are complex conjugate pairs.
            \item $p$ even: No real positive root; all are complex conjugate pairs.
        \end{itemize}
        
In both, the selection of $p$ roots is not unique, violating the simplicity of the largest eigenvalue.
If $C_{N,\gamma} > r_{\text{cr}}^p$, the Cassini oval merges into a single component, again making the selection non-unique.
Therefore, $C_{N,\gamma} \in (0, r_{\text{cr}}^p)$.

\end{proof}

\begin{figure}
    \center{\includegraphics[width=1\linewidth]{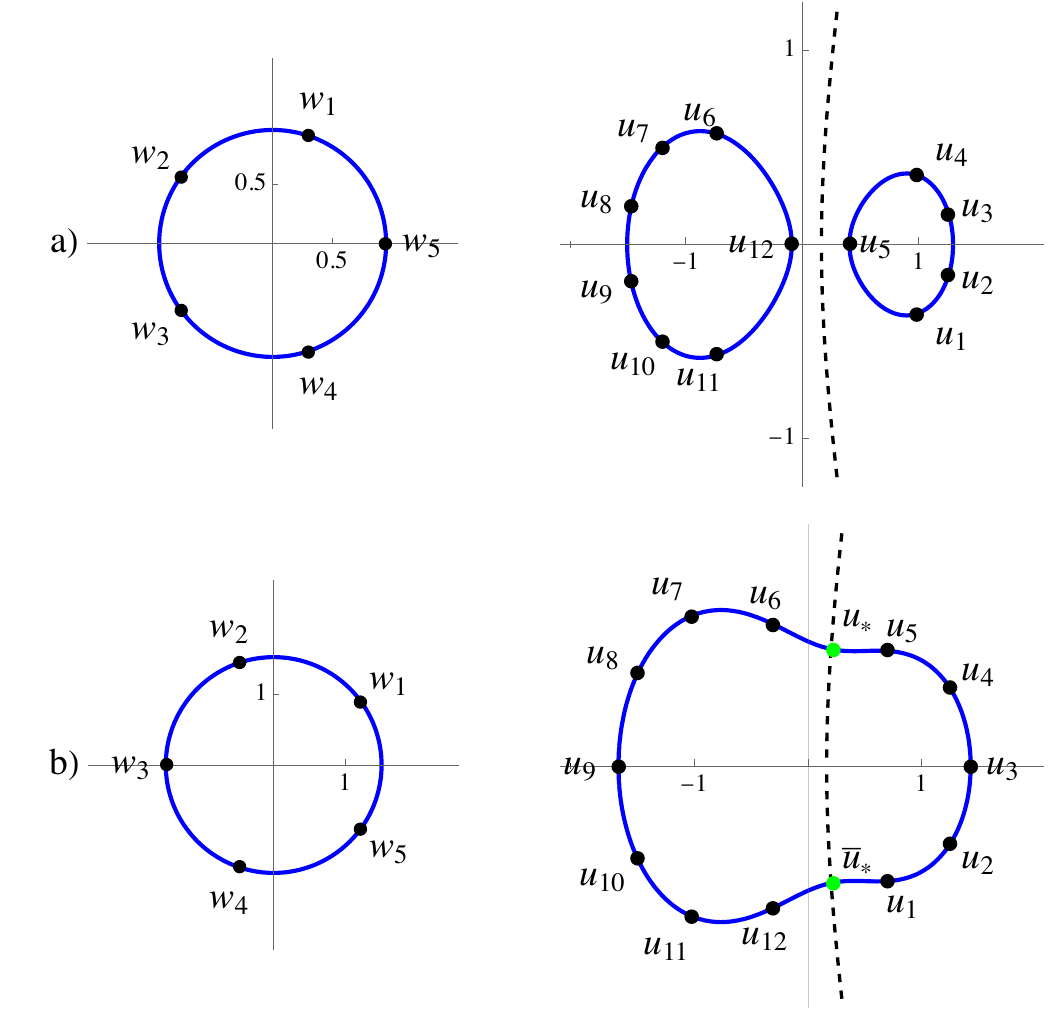}}
			\caption{Illustration for $N=12, p=5 \ \rho = 5/12$ showing the chosen labelling of points $w_1, \dots, w_p$ (left column) on a circle and the corresponding points $u_1, \dots, u_N$ on Cassini curves. 
            a) Subcritical focal radius is chosen with $\theta = 0$. The Cassini curve consists of two ovals.
            b) Supercritical focal radius with $\theta = \pi/p$.
             The Cassini curve consists of one oval.
             A dashed line is the boundary of the domain $D_+$ defined by \eqref{eq: domain}.}
			\label{fig:  Cassini contour with sep curve}
	\end{figure}

\section{Asymptotic Analysis} \label{sec: Asymptotic analysis}
In this section, we analyze the asymptotic behaviour of the largest and second largest eigenvalues $\lambda_1(\gamma)$ and $\lambda_2(\gamma)$ in the thermodynamic limit, under Assumption \ref{ass: gammapos} and \ref{ass: gammaneg} in the previous section.

To approximate sums over $Z_j = u_{l(j)}(C)$ with $j=1, \dots, p$, we use the Euler-Maclaurin formula, which allows us to replace discrete sums with integrals plus controlled error terms. We use the version of the Euler-Maclaurin formula (\cite{Dieudonne1971}, Chapter IX, problem 10, p. 285),  which expresses the error term as an integral, making it more suitable for our analysis than the traditional sum over derivatives. Throughout, all logarithms and arguments are taken in the principal branch, i.e., $\arg z \in (-\pi, \pi]$ and $\log z = \log|z| + i\arg z$.

\begin{theorem} \label{Euler-Maclaurin sum theorem}
    Let $f(z)$ be a function analytic in a strip $\alpha< \Re z < \beta $ and let $m,n$ be two integers such that $\alpha<m<n<\beta$. Suppose that 
    $$ \lim_{t \rightarrow \pm \infty} e^{-2 \pi |t|} f(s+ i t)=0$$
    uniformly for $s \in (\alpha, \beta)$.
    Then
	\begin{equation}\label{eq: EM summation}
		\sum_{j = m}^n f(j) = \int_m^n f(z) \ \ dz + E_f(m,n),
	\end{equation}
	where the error term is given by
    \begin{equation} \label{eq: EM error term}
        \begin{aligned}
        &E_f(m,n) := \frac{f(m)+f(n)}2\\
        &\qquad \qquad \quad + \int_0^\infty \frac{f(n+it)-f(n-it)-f(m+it)+f(m-it)}{i(e^{2\pi t}-1)} \, dt.
        \end{aligned}
    \end{equation}
\end{theorem}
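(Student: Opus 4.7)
The statement is the finite form of the Abel--Plana summation formula, and the natural proof is by contour integration with the meromorphic kernels
\[
g_+(z) := \frac{1}{e^{-2\pi i z}-1}, \qquad g_-(z) := \frac{1}{e^{2\pi i z}-1},
\]
which have simple poles at every integer $j$ with residues $\mp(2\pi i)^{-1}$, satisfy $g_+(z) + g_-(z) = -1$ for $z \notin \mathbb{Z}$ and $g_+(j+it) = g_-(j-it) = (e^{2\pi t}-1)^{-1}$ for $t > 0$, and decay as $e^{-2\pi |\Im z|}$ as $|\Im z| \to \infty$ in the respective half planes.

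The scheme is as follows. Apply Cauchy's theorem to $f(z) g_+(z)$ on the contour $\Gamma_R^+$: the rectangle with vertices $m, n, n+iR, m+iR$, traversed counterclockwise and deformed by semicircular indentations of radius $\epsilon$ into the upper half plane at every interior integer $j \in \{m+1, \ldots, n-1\}$ and by quarter-circles at the endpoints $m, n$, so that no poles are enclosed and $\oint_{\Gamma_R^+} f g_+\, dz = 0$. The mirror construction applies $f(z) g_-(z)$ to the rectangle $\Gamma_R^-$ with vertices $m, n, n-iR, m-iR$, indented into the lower half plane and traversed clockwise so that $\oint_{\Gamma_R^-} f g_-\, dz = 0$ as well. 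The orientations are chosen so that both contours traverse the real-axis segment from $m$ to $n$.

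Then take the limits $R \to \infty$ and $\epsilon \to 0$. The horizontal edges at $\Im z = \pm R$ vanish: the decay hypothesis $e^{-2\pi |t|} f(s+it) \to 0$ dominates $|g_\pm(x \pm iR)| \le (e^{2\pi R}-1)^{-1}$, and the edge length is $n-m$. Each interior semicircular indentation picks up a half-residue, contributing $\tfrac{1}{2} f(j)$ in each contour integral, and the endpoint quarter-circles contribute $\tfrac{1}{4} f(m)$ and $\tfrac{1}{4} f(n)$ per contour. After summing the two contour identities, the real-axis principal-value integrals of $f g_+$ and $f g_-$ collapse via $g_+ + g_- = -1$ to $-\int_m^n f(x)\, dx$, the indentation contributions combine to $\sum_{j=m+1}^{n-1} f(j) + \tfrac{1}{2}(f(m) + f(n))$, and the four vertical segments at $\Re z = m, n$, using the conjugation identity and the appropriate orientations, assemble into
\[
-\int_0^\infty \frac{f(n+it) - f(n-it) - f(m+it) + f(m-it)}{i(e^{2\pi t}-1)}\, dt.
\]
Rearranging the resulting equation yields the stated identity.

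The main obstacle is the bookkeeping of signs and of logarithmic singularities near the corner points $(m, 0)$ and $(n, 0)$: the real-axis PV integrals of $f g_\pm$ and the vertical segment integrals each diverge like $\log \epsilon$ as $\epsilon \to 0$ (coming from the $[2\pi i (z-j)]^{-1}$ behaviour of $g_\pm$ near each endpoint integer and the $(2\pi t)^{-1}$ behaviour of $(e^{2\pi t}-1)^{-1}$ near $t = 0$), and these divergences must cancel pairwise. A clean way to organize the cancellation is to decompose $g_\pm$ near each endpoint as the polar part $\mp [2\pi i(z-j)]^{-1}$ plus a function holomorphic at $j$: the polar parts are absorbed into the explicit quarter-circle residues (producing $\tfrac{1}{4} f(m) + \tfrac{1}{4} f(n)$ per contour), while the holomorphic remainders pass to the limit by dominated convergence.
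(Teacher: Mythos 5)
The paper does not prove this statement: it is quoted verbatim from Dieudonn\'e (Chapter IX, problem 10) as a known finite form of the Abel--Plana formula, so there is no in-paper argument to compare against. Your contour-integration proof is the standard derivation of that formula and is correct: the kernels $g_\pm$ have the residues, the identity $g_++g_-=-1$, the boundary values $g_+(j+it)=g_-(j-it)=(e^{2\pi t}-1)^{-1}$, and the decay you claim; the half-residues from the interior indentations and the quarter-residues at the two corners sum across the two contours to $\sum_{j=m+1}^{n-1}f(j)+\tfrac12\big(f(m)+f(n)\big)$; and the signs of the four vertical edges do assemble (after rearrangement, using $-i=1/i$) into the error term \eqref{eq: EM error term} exactly as stated. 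You also correctly isolate the only genuinely delicate point, namely the pairwise cancellation of the logarithmic divergences at the corners between the principal-value real-axis integrals and the vertical edges near $t=0$, and your device of splitting $g_\pm$ into polar part plus a remainder holomorphic at the endpoint is a clean way to justify the limit $\epsilon\to0$. The one point worth tightening is the tail of the vertical integrals: the stated hypothesis $e^{-2\pi|t|}f(s+it)\to0$ only makes the integrand $o(1)$ as $t\to\infty$, so absolute convergence of $\int_0^\infty$ in \eqref{eq: EM error term} is not automatic; this is a defect inherited from the statement as quoted rather than from your argument, but a careful write-up should either assume the error integral converges or note that the limit $R\to\infty$ is taken in the sense in which it exists.
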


Before applying the Euler-Maclaurin formula, we note that the solution maps $u_j = u_j(C)$ for $j=1, \dots, p$ can be parametrized by points lying on a circle of radius $|C|^{\frac 1p}$. This lemma generalizes the result \cite[(4.7)--(4.8)]{GwaSpohn1992spgap}, which was stated for the special case $\rho=1/2$,  to arbitrary density $\rho$.

\begin{lemma} \label{lem: implicit}
    There exists an invertible analytic function
    $$
    v: \mathbb{C} \setminus [r_\mathrm{cr},+\infty) \rightarrow D_+^\circ,
    $$
    reconstructing the positions of $u_j(C) = v(w_j)$ in a complex plane from equidistant points of a circle.
    $$
    w_j := (-C)^{\frac{1}{p}} e^{i\frac{(2j-1)\pi}{p}}, \qquad j=1,\ldots,p.
    $$
\end{lemma}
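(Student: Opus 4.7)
The plan is to construct $v$ as the inverse of the explicit analytic function
\[
g(u) := (1-u)(1+u)^{(1-\rho)/\rho},
\]
defined with the principal branches on $\mathbb{C}\setminus(-\infty,-1]$. I will show that $g$ restricts to a biholomorphism $D_+^\circ \to \mathbb{C}\setminus[r_\mathrm{cr},+\infty)$, from which the lemma follows by setting $v := g^{-1}$ and verifying the reconstruction relation. A direct computation yields $g'(u) = \rho^{-1}(1+u)^{(1-\rho)/\rho-1}(1-2\rho-u)$, which vanishes only at $u_* := 1-2\rho \in \partial D_+^\circ$ with critical value $g(u_*) = r_\mathrm{cr}$, so $g$ is a local biholomorphism on all of $D_+^\circ$.

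The bridge between $g$ and the Cassini geometry is the identity $|g(u)| = |1-u|\,|1+u|^{(1-\rho)/\rho} = (|1-u|^\rho|1+u|^{1-\rho})^{1/\rho}$, which shows that $g$ sends each Cassini oval $\mathcal{C}(r)$ onto the round circle $\{|w|=r\}$. On the two smooth boundary arcs $\Gamma_\pm \subset \partial D_+^\circ$ characterized in the upper/lower half planes by $\rho\arg(1-u) + (1-\rho)\arg(1+u) = 0$, the same decomposition yields $\arg g(u) = 0$, so $g(\partial D_+^\circ)\subset\mathbb{R}_{\geq 0}$. A geometric analysis shows that $\Gamma_\pm$ emanate from $u_*$ and tend to infinity in asymptotic directions $\arg u = \pm\rho\pi$, along which $|g(u)|=r$ increases monotonically from $r_\mathrm{cr}$ to $+\infty$. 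Consequently, $g$ maps $\partial D_+^\circ$ bijectively onto $[r_\mathrm{cr},+\infty)$.

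Global bijectivity is established fiber by fiber. Given $w_0 \in \mathbb{C}\setminus[r_\mathrm{cr},+\infty)$ and $r := |w_0|$, the preimage $g^{-1}(w_0)\cap D_+^\circ$ lies on $\mathcal{C}(r)\cap D_+^\circ$. For $r<r_\mathrm{cr}$, this intersection is the closed right Cassini lobe, an embedded loop enclosing the simple zero $u=1$ of $g$; the argument principle yields exactly one preimage. For $r > r_\mathrm{cr}$, it is a single open arc with both endpoints on $\Gamma_\pm$ mapping to the real value $r$, and a winding-number calculation again produces exactly one preimage whenever $w_0 \neq r$. Combined with the local biholomorphism property, this gives a biholomorphism $g\colon D_+^\circ \to \mathbb{C}\setminus[r_\mathrm{cr},+\infty)$, and I set $v := g^{-1}$.

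Finally, the parametrization $u_j(C) = v(w_j)$ is verified as follows. Since $g(u_j(C))^p = (1-u_j)^p(1+u_j)^{N-p} = C = w_j^p$, the ratio $g(u_j)/w_j$ is a $p$-th root of unity. The labelling $F(u_j) \in I_j$ gives, through $\arg g(u) = (\pi/p)(F(u)-p) \pmod{2\pi}$, a range for $\arg g(u_j)$ that coincides with $\arg w_j = \theta + (2j-1)\pi/p \pmod{2\pi}$, pinning the root of unity down to $1$. The main technical obstacle lies in the boundary analysis: rigorously confirming that $\partial D_+^\circ\cap\{\Im u\neq 0\}$ consists of exactly the two curves $\Gamma_\pm$ with the stated asymptotic directions, and that $|g|$ is monotone along each of them. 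Once this geometric picture is in place, the remaining argument-principle step is routine.
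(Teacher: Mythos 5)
Your proposal follows the paper's approach exactly: $v$ is defined as the inverse of the map $w(u) = (1-u)(1+u)^{N/p-1}$, which is precisely your $g$ since $N/p - 1 = (1-\rho)/\rho$. The paper's own ``proof'' is a one-line assertion that $w$ is one-to-one on $D_+^\circ$ with image $\mathbb{C}\setminus[r_\mathrm{cr},+\infty)$ and that $w(u_j)=w_j$ is ``easy to check,'' whereas you supply an actual argument (critical-point computation placing the unique branch point at $u_*=1-2\rho$ with value $r_\mathrm{cr}$, the Cassini-to-circle modulus identity, boundary mapping to $[r_\mathrm{cr},\infty)$, the argument-principle fiber count, and the $F$-labelling pinning down the $p$-th root of unity); the boundary-monotonicity gap you flag is real but is also silently assumed in the paper, so your version is strictly more complete than the original.
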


\begin{proof}
    Let $w$ be the complex-valued function
    \begin{equation} \label{def: w(u)}
        w(u) := (1-u) \ (1+u)^{\frac{N}{p}-1}.
    \end{equation}
    It is easy to check that $w(u_j)=w_j$ for the first $p$ solutions $j=1$, \dots, $p$. Since $w$ is one-to-one on the domain $D_+^\circ$ and the image is $\mathbb{C}\setminus[r_\mathrm{cr},+\infty)$, we define $v=v(w)$ to be the inverse of $w|_{D_+^\circ}$.
\end{proof}
%
\begin{remark}
    Recall $r = |C|^{\frac 1p}$. If $r>r_\mathrm{cr}$,
    $$
    \begin{aligned}
        &\lim_{N\rightarrow\infty} u_p(C) = \lim_{\delta\downarrow0} v(re^{-i\delta}) = u_*,\\
        &\lim_{N\rightarrow\infty} u_1(C) = \lim_{\delta\downarrow0} v(re^{i\delta}) = \bar u_*,
    \end{aligned}
    $$
    where $u_*=u_*(r)$ is the unique solution to
    \begin{equation} \label{eq: def-u*}
        \left\{
        \begin{aligned}
            &\,(1-u_*)^\rho(1+u_*)^{1-\rho} = r^\rho,\\
            &\,\rho \arg(1-u_*) + (1-\rho) \arg(1+u_*) = 0,
        \end{aligned}
        \right.
    \end{equation}
    such that $\Im(u_*)>0$, and $\bar u_*$ is the complex conjugate of $u_*$. Notice that the Cassini contour $\mathcal C(r)$ intersects the boundary of $D_+$ at $(u_*,\bar u_*)$, see in Figure \ref{fig:  Cassini contour with sep curve}.
    For $r=r_\mathrm{cr}$, we adopt the convention $u_*(r_\mathrm{cr}) = \bar u_*(r_\mathrm{cr}) = 1-2\rho$.
\end{remark}
\begin{example} \label{ex:half-fill_r*}
    In the case of half-filling $N=2p$, we find a purely imaginary
    \begin{equation} \label{half-fillr*}
        u_*(r) = i \sqrt{r-1}.
    \end{equation}
\end{example}

\subsection{Asymptotic analysis for $\gamma>0$}
\label{sec: asym+}

Introduce the thermodynamic limit functions
$$
\begin{aligned}
    G_\infty(z) &:= \frac{1}{\pi} \bigg\{ \frac{1-\rho}{\rho} \log|1+z| \arg(1+z)\\
    &\qquad + \log2 \arg(1-z) - \Im \left[ \operatorname{Li}_2 \left( \frac{1}{2} - \frac{z}{2} \right) \right] \bigg\},\\
    \Lambda_\infty(z) &:= \frac{1}{\pi} \left\{ \frac{\Im z}{2} + (\rho-1)\arg(1+z) \right\},\\
    g_\infty(z) &:= \frac{\pi i}{z-1+2\rho} \left\{ 1-z^2 - \frac{2\rho\Im z}{\arg(1+z)} \right\},
\end{aligned}
$$
where in $G_\infty$, $\operatorname{Li}_2$ is the \emph{dilogarithm function}
\begin{equation}
    \operatorname{Li}_2(z) := - \int_0^z \frac{\log(1-u)}{u}\ du, \qquad z \in \mathbb{C} \setminus [1, +\infty).
\end{equation}
Theorem \ref{thm: asym+}, the main result of this section, states that for $\gamma>0$, these functions capture the thermodynamic behaviour of the consistency condition, the largest eigenvalue, and the spectral gap, respectively.

\begin{theorem} \label{thm: asym+}
    For $\gamma>0$, the consistency condition of the largest and the second largest eigenvalue takes the following form in the thermodynamic limit.
    \begin{equation} \label{eq: asym-cc}
    \gamma = G_\infty\big(u_*(r_*)\big).
    \end{equation}
    This equation has a unique solution $r_* = r_*(\gamma) > r_\mathrm{cr}$. Under Assumption \ref{ass: gammapos},
    \begin{align}
    &\limsup_{N\rightarrow\infty} \Big|\lambda_1(\gamma;N) - N\Lambda(\gamma, \rho) \Big| < \infty, \label{eq: asym-1+}\\
        &\lim_{N\rightarrow\infty} \Big|N\big(\lambda_1(\gamma;N) - \lambda_2(\gamma;N)\big) - g(\gamma, \rho)\Big| = 0, \label{eq: asym-gap+}
    \end{align}
    where 
    \begin{equation} \label{def LambdaCoef}
        \Lambda(\gamma, \rho): = \Lambda_\infty\big(u_*(r_*(\gamma))\big),
    \end{equation}
    \begin{equation}\label{def gCoef}
        g(\gamma, \rho): = g_\infty\big(u_*(r_*(\gamma))\big).
    \end{equation}
    In addition, $r_*(\gamma)$ is strictly increasing in $\gamma$ and $\lim_{\gamma\rightarrow0+} r_*(\gamma) = r_\mathrm{cr}$.
\end{theorem}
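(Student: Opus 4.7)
The plan is to convert the consistency condition \eqref{eq: consistency} and the eigenvalue formula \eqref{eq: ev via Z} into quantities depending on the single real parameter $r = |C_{N,\gamma}|^{1/p}$, and then pass to the limit $N\to\infty$ via the Euler-Maclaurin formula (Theorem \ref{Euler-Maclaurin sum theorem}). Under Assumption \ref{ass: gammapos}, the $p$ selected Bethe roots are $Z_j = u_j(C_{N,\gamma}) = v(w_j)$, $j=1,\ldots,p$, where by Lemma \ref{lem: implicit} the points $w_j = (-C_{N,\gamma})^{1/p} e^{i(2j-1)\pi/p}$ are equidistant on the circle $|w|=r$. By Propositions \ref{prop: Creal} and \ref{prop: C} the parameter $C_{N,\gamma}$ is real; combined with the a priori estimate of Appendix \ref{app boundedness}, this confines $r(N,\gamma)$ to a compact subinterval of $(r_\mathrm{cr},+\infty)$ for any fixed $\gamma>0$ and all $N$ large enough.

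First, take the logarithm of \eqref{eq: consistency} to obtain
$$
N(\gamma+\log 2) + \sum_{j=1}^p \log\frac{v(w_j)-1}{v(w_j)+1} \equiv \log(-C_{N,\gamma}) \pmod{2\pi i\mathbb{Z}}.
$$
Apply Theorem \ref{Euler-Maclaurin sum theorem} to the sum; after the change of variable $w=w(u)$, its leading integral pulls back to a contour integral over the right arc of the Cassini oval $\mathcal{C}(r)$, computed explicitly in Appendix \ref{app: contour integration}. Dividing by $N$ yields the identity $\gamma = G_\infty(u_*(r)) + O(N^{-1})$. A direct monotonicity analysis of $r\mapsto G_\infty(u_*(r))$ on $[r_\mathrm{cr},\infty)$, using the defining system \eqref{eq: def-u*}, shows that it is continuous and strictly increasing, vanishes at $r_\mathrm{cr}$, and diverges as $r\to\infty$. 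Hence \eqref{eq: asym-cc} admits a unique solution $r_*(\gamma)>r_\mathrm{cr}$, strictly increasing in $\gamma$, with $r_*(\gamma)\to r_\mathrm{cr}$ as $\gamma\to 0^+$; moreover $r(N,\gamma) = r_*(\gamma) + O(N^{-1})$.

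Second, apply Theorem \ref{Euler-Maclaurin sum theorem} to $\lambda_1 = \tfrac12 \sum_{j=1}^p (v(w_j)-1)$. The integral contribution equals $N\Lambda_\infty(u_*(r))$, while the Euler-Maclaurin error is uniformly $O(1)$; substituting $r = r_*(\gamma) + O(N^{-1})$ and using smoothness of $\Lambda_\infty\circ u_*$ at $r_*(\gamma)$ yields \eqref{eq: asym-1+}. For the spectral gap, let $A^{(2)}$ denote one of the minimal modifications in \eqref{choice2}, differing from $A^{(1)}=\{1,\ldots,p\}$ by swapping a single index. The Euler-Maclaurin expansion of $\lambda_2$ is identical to that of $\lambda_1$ apart from two swapped summands and a shift $r^{(2)} = r(N,\gamma) + \delta r$ in the parameter, where $\delta r = O(N^{-1})$ is determined by the shifted consistency condition. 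Expanding both eigenvalues around $r_*(\gamma)$ cancels the $N\Lambda_\infty(u_*(r_*))$ contribution; the surviving $O(N^{-1})$ residue, after elementary simplification using \eqref{eq: def-u*}, equals $g_\infty(u_*(r_*(\gamma)))$, proving \eqref{eq: asym-gap+}.

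The principal technical obstacle is controlling the Euler-Maclaurin error uniformly in $N$ near the two extremal nodes $w_p, w_1$, whose images $v(w_p), v(w_1)$ approach $u_*(r_*), \bar u_*(r_*)$ on the boundary $\partial D_+$; this is precisely the locus where the map $v$ extends discontinuously across the branch cut $[r_\mathrm{cr},+\infty)$, so the analyticity and decay hypotheses of Theorem \ref{Euler-Maclaurin sum theorem} must be verified in a strip chosen to isolate $w=r$ from the integrand's domain of holomorphy. A second delicate point is the spectral-gap calculation, where the $O(1)$ change from swapping one summand and the $O(1)$ change from the parameter shift $\delta r$ must cancel exactly to leading order, leaving the explicit $N^{-1}$ residue $g_\infty(u_*(r_*))$; this cancellation is made rigorous via the implicit-function relation between $\delta r$ and the swapped-index contribution at $r=r_*(\gamma)$.
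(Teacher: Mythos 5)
Your plan for the consistency condition and for the largest eigenvalue follows essentially the paper's own route: Euler--Maclaurin applied to the sum over the selected Bethe roots, a change of variables pulling the leading integral back to the right arc of the Cassini oval (the content of Lemmas \ref{lemB1: change of variables}--\ref{lemB3: asymptotis of IN}), the a priori bound of Proposition \ref{prop:C bounded} to control $r_N$, and strict monotonicity of $r\mapsto G_\infty(u_*(r))$ to identify the unique $r_*(\gamma)>r_\mathrm{cr}$. That part of the proposal is sound and matches the paper.

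The spectral-gap paragraph, however, contains a genuine error in the orders of magnitude, and the ``cancellation'' mechanism you describe is not what actually happens. Swapping the index $p$ for $p+1$ changes the eigenvalue sum $\frac12\sum_j(u_j-1)$ by $\frac12(u_{p+1}-u_p)=O(N^{-1})$, since consecutive roots on a Cassini oval of fixed size are $O(N^{-1})$ apart --- not $O(1)$. Likewise, the perturbation of the consistency condition caused by the swap together with the momentum shift $\frac{2\pi i}{Np}$ is of size $O(N^{-2})$, which forces the relative shift of the Cassini parameter to be $O(N^{-2})$ (in the paper's notation $\tilde w_j=(1+\xi)w_j$ with $\xi\sim 4\pi^2 i\,N^{-2}\big[(u_*-1+2\rho)\arg(1+u_*)\big]^{-1}$), not $\delta r=O(N^{-1})$ as you assert. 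Indeed, if $\delta r$ were $O(N^{-1})$, the parameter shift alone would move the eigenvalue by $N\,\frac{d}{dr}\Lambda_\infty(u_*(r))\,\delta r=O(1)$ and the gap could not close at rate $N^{-1}$. The correct structure is that the parameter-shift contribution $\frac12\sum_j\big(u_j(\tilde C)-u_j(C)\big)=O(N\xi)=O(N^{-1})$ and the single-root-swap contribution $\frac12(u_{p+1}-u_p)=O(N^{-1})$ enter at the \emph{same} order and \emph{add} --- they do not cancel --- producing $g_\infty(u_*(r_*))/N$; the two summands inside the braces in the definition of $g_\infty$ are precisely these two contributions. As written, your plan would send you hunting for a leading-order cancellation followed by a next-order computation, which is not the structure of the argument and would not reproduce \eqref{def gCoef}.
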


In the following two subsections, we prove this theorem, presenting Proposition~\ref{prop: asym} and the Proofs of  \eqref{eq: asym-1+} and \eqref{eq: asym-gap+}.
Figure \ref{fig: GLambda} illustrates the monotonic behaviour of $G_\infty\big(u_*(r)\big)$ and $\Lambda_\infty\big(u_*(r)\big)$ as functions of $r$.

\begin{figure}
    \center{\includegraphics[width=0.7\linewidth]{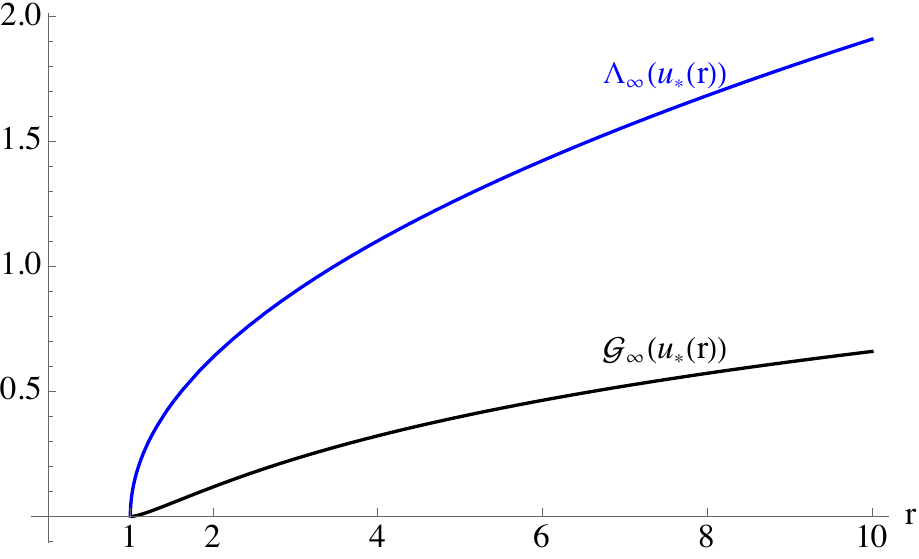}}
			\caption{The graphs of $G_\infty\big(u_*(r)\big)$ and $\Lambda_\infty\big(u_*(r)\big)$ for $\rho = \frac 12 $ illustrated in a variable $r.$ Here, $r_\mathrm{cr} = 1$.}
			\label{fig: GLambda}
	\end{figure}

%
%

\subsubsection{The largest eigenvalue}
From the strategy described above to find the largest eigenvalue we select $p$ solution maps out of $N$, i.e. fix a set $A$. According to Assumption \ref{ass: gammapos}, $A = \{1,\ldots,p\}$ for $\lambda_1(\gamma)$. Then, we solve the consistency condition \eqref{eq: consistency}, finding $C_{N,\gamma}$, and use \eqref{eq: ev via Z}--\eqref{eq: sol-BAE} to calculate the eigenvalue.

As Proposition~\ref{prop: C} states, the value $C_{N,\gamma}$ has to be real, such that $C_{N,\gamma} \in (-\infty, r_\mathrm{cr}^p]$. Therefore, we introduce 
\begin{equation} \label{eq: def-r-1}
    r_N = r_N(\gamma) := |C_{N,\gamma}|^{\frac{1}{p}} \in \mathbb{R}_{\ge0}.
\end{equation}
We show that $\{r_N;N\ge1\}$ is a convergent sequence, and we characterize the limit $r_*=r_*(\gamma)$ by the asymptotic equation \eqref{eq: asym-cc} arising from \eqref{eq: consistency}.

The choice of the solutions maps is
$$
Z_j = u_j(C_{N,\gamma}), \qquad j=1,\ldots,p.
$$
In proving Proposition~\ref{prop: Creal}, we argued that the set $\{Z_1,\ldots, Z_p\}$ for the largest eigenvalue must be invariant under complex conjugation. Therefore, the momentum $m=0$ in \eqref{eq: momentum in Zk}. From the definition of the mapping $w(\cdot)$ with $w(Z_j) = w_j$ used in Lemma \ref{lem: implicit}, we find
$$
\prod_{j=1}^p \frac{1-Z_j}{1+Z_j} \prod_{j=1}^p (1+Z_j)^{\frac{N}{p}} = \prod_{j=1}^p w_j = (-1)^{p-1}C_{N,\gamma}.
$$
This observation simplifies the consistency equation \eqref{eq: consistency}, which becomes
\begin{equation} \label{eq: cc1 simple}
	2^p e^{\gamma p} = \prod_{j=1}^{p} \big(1+u_j(C_{N,\gamma})\big).
\end{equation}
\begin{remark}
    This form of the consistency condition agrees with the quantization condition \eqref{def momentum}.
\end{remark}
Hence, $C_{N,\gamma}$ is determined by the equation
\begin{equation} \label{eq: log cc1}
	\gamma = \frac{1}{p} \sum_{j=1}^{p} \log \big(1+u_j(C_{N,\gamma})\big) - \log2.
\end{equation}
Once $C_{N,\gamma}$ is found, the largest eigenvalue follows from \eqref{eq: ev via Z}
\begin{equation} \label{eq: la1}
	\lambda_1(\gamma) = \frac12\sum_{j=1}^{p} \big(u_j(C_{N,\gamma})-1\big).
\end{equation}
Introducing a variable $C \in (-\infty,r_\mathrm{cr}^p]$, we denote the right-hand sides of \eqref{eq: log cc1} and \eqref{eq: la1} by
\begin{align} \label{def GN}
    G_N(C) &:= \frac{1}{p} \sum_{j=1}^{p} \log \big(1+u_j(C)\big) - \log2,\\
    \Lambda_N(C) &:= \frac12\sum_{j=1}^{p} \label{def LN}\big(u_j(C)-1\big).
\end{align}
The next proposition is crucial for proving the first assertion \eqref{eq: asym-1+} in Theorem~\ref{thm: asym+}.

\begin{proposition} \label{prop: asym}
    Let $r=|C|^{\frac 1p}$. Then,
    \begin{align}
        &G_N(C) = \mathbf1_{\{r>r_\mathrm{cr}\}} G_\infty\big(u_*(r)\big) + \frac{R_{N,C}}N, \label{eq: asym-g}\\
        &\Lambda_N(C) = N\mathbf1_{\{r>r_\mathrm{cr}\}} \Lambda_\infty\big(u_*(r)\big) + R'_{N,C}, \label{eq: asym-la}
    \end{align}
    where $R_{N,C}$ and $R'_{N,C}$ are uniformly bounded for any compact set of $C$.
\end{proposition}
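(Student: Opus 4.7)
The plan is to apply the Euler--Maclaurin formula (Theorem~\ref{Euler-Maclaurin sum theorem}) to convert the sums defining $G_N$ and $\Lambda_N$ into contour integrals, then to evaluate those integrals by complex analysis. Using Lemma~\ref{lem: implicit}, I write $u_j(C)=v(w_j)$ with $w_j=(-C)^{1/p}e^{i(2j-1)\pi/p}$, and for complex $z$ introduce the interpolants
\begin{equation*}
f_G(z):=\log\!\bigl(1+v(w(z))\bigr),\qquad f_\Lambda(z):=v(w(z))-1,\qquad w(z):=(-C)^{1/p}e^{i(2z-1)\pi/p}.
\end{equation*}
I would first verify that both $f_G$ and $f_\Lambda$ are analytic in a strip $\tfrac12-\delta<\Re z<p+\tfrac12+\delta$: for $z$ in that strip, $\arg w(z)\bmod 2\pi$ stays strictly inside $(0,2\pi)$, so the image $w(z)$ avoids the branch cut $[r_\mathrm{cr},+\infty)$ of $v$. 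The decay condition $e^{-2\pi|t|}f(\sigma+it)\to 0$ required by Theorem~\ref{Euler-Maclaurin sum theorem} follows from $v(w)\to 1$ as $w\to 0$ and from the asymptotic $|v(w)|\sim|w|^{\rho}$ as $w\to\infty$.

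Applying the theorem with $m=1$, $n=p$, substituting $s=(2t-1)\pi/p$, and completing the integration interval to a full period (at the cost of an $O(1)$ error because the omitted arcs have length $O(1/p)$ and the integrand is bounded on any compact set of $C$ bounded away from the critical circle $|C|=r_\mathrm{cr}^p$), one arrives at
\begin{equation*}
\sum_{j=1}^p f(j)\;=\;p\,\mathcal{I}(r;F)\;+\;O(1),\qquad \mathcal{I}(r;F):=\frac{1}{2\pi i}\oint_{|w|=r}\frac{F(v(w))}{w}\,dw,
\end{equation*}
with $F=\log(1+\cdot)$ for $f_G$ and $F(u)=u-1$ for $f_\Lambda$; the $O(1)$ error is uniform for such $C$.

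The loop integral $\mathcal I(r;F)$ is then evaluated by complex analysis. For $r<r_\mathrm{cr}$, $v$ is analytic on $\{|w|\leq r\}$ and Cauchy's formula gives $\mathcal I(r;F)=F(v(0))=F(1)$, namely $\log 2$ for $f_G$ and $0$ for $f_\Lambda$. For $r>r_\mathrm{cr}$, I would deform the contour into a small loop around $w=0$ plus a keyhole around the cut $[r_\mathrm{cr},r]$, yielding
\begin{equation*}
\mathcal{I}(r;F)\;=\;F(1)\;+\;\frac{1}{2\pi i}\int_{r_\mathrm{cr}}^{r}\frac{F(u_*(x))-F(\bar u_*(x))}{x}\,dx.
\end{equation*}
The numerator equals $2i\arg(1+u_*(x))$ for $F=\log(1+\cdot)$ and $2i\Im u_*(x)$ for $F(u)=u-1$. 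Substituting $y=u_*(x)$ along the boundary curve of $D_+$, using the key observation that $\tfrac{dx}{x}=\bigl[\tfrac{1}{y-1}+\tfrac{1-\rho}{\rho(y+1)}\bigr]dy$ is real along $\partial D_+$ (a direct consequence of the defining relations \eqref{eq: def-u*}), and then evaluating with the antiderivatives $\log 2\,\log\tfrac{1-y}{2}-\operatorname{Li}_2(\tfrac{1-y}{2})+\tfrac{1-\rho}{2\rho}\log^2(1+y)$ and $\tfrac{y}{\rho}-\tfrac{2(1-\rho)}{\rho}\log(1+y)$ respectively, identifies these integrals with $G_\infty(u_*(r))$ and $\tfrac{2}{\rho}\Lambda_\infty(u_*(r))$ after taking imaginary parts. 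Dividing by $p$ and subtracting $\log 2$ in the $G$ case, and multiplying by $p/2$ in the $\Lambda$ case, produces \eqref{eq: asym-g} and \eqref{eq: asym-la} with uniformly bounded remainders.

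The principal technical obstacle is establishing the uniform bound on the Euler--Maclaurin remainder $E_f(1,p)$ in \eqref{eq: EM error term}: one must show that the integrand $(f(p\pm it)-f(1\pm it))/(i(e^{2\pi t}-1))$ is dominated, uniformly in $C$ on a compact set, by an integrable function of $t$, which reduces to uniform control of $v$ along the horizontal rays $w=w_0\,e^{-2\pi t/p}$ as $t\to\pm\infty$. A secondary delicate point is the critical regime $r\downarrow r_\mathrm{cr}$, where $u_*(r),\bar u_*(r)$ coalesce at the double point $1-2\rho$ and $v$ develops a square-root-type singularity; uniformity of the estimate then either requires excluding a neighbourhood of the critical circle or a refined local analysis of $v$ near the critical point.
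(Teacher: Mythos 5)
Your proposal follows the same overall strategy as the paper (Euler--Maclaurin to convert the sums over $u_j(C)$ into an integral, then complex analysis to evaluate it), but the key evaluation step is genuinely different. The paper changes variables to the $u$-plane (Lemma \ref{lemB1: change of variables}), integrates along the Cassini arc $\Gamma(r)$ from $u_1$ to $u_p$, closes it through the origin, and computes residues plus explicit antiderivatives involving $\operatorname{Li}_2$ (Lemmas \ref{lemB2:contour integration}--\ref{lemB3: asymptotis of IN}). You instead stay in the $w$-plane, complete the arc to the full circle $|w|=r$, and decompose into a residue at $w=0$ (giving $F(1)$, hence the subcritical case for free) plus a keyhole around the branch cut $[r_\mathrm{cr},r]$, whose jump $F(u_*(x))-F(\bar u_*(x))$ yields $G_\infty$ and $\Lambda_\infty$ as the cut integrals $\frac{1}{\pi}\int_{r_\mathrm{cr}}^{r}\arg\big(1+u_*(x)\big)\frac{dx}{x}$ and $\frac{1}{\pi}\int_{r_\mathrm{cr}}^{r}\Im u_*(x)\frac{dx}{x}$. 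I checked that these primitives do reproduce the stated $G_\infty$, $\Lambda_\infty$ (differentiate in $r$ using $\frac{dx}{x}=\big[\frac{1}{y-1}+\frac{1-\rho}{\rho(y+1)}\big]dy$ and note both sides vanish at $r=r_\mathrm{cr}$). Your route buys an immediate proof that $G_\infty(u_*(r))$ is increasing in $r$ (the integrand $\arg(1+u_*)/x$ is positive), which the paper needs separately in the proof of Theorem \ref{thm: asym+}; the paper's route delivers the closed forms directly and isolates the residue bookkeeping in the appendix. One small imprecision: your strip of analyticity should be the open strip $\frac12<\Re z<p+\frac12$ (the preimage of the cut sits exactly on $\Re z\equiv\frac12\bmod p$), which still satisfies the hypotheses of Theorem \ref{Euler-Maclaurin sum theorem} with $m=1$, $n=p$; and your stated antiderivative for the $\Lambda$ case is only correct after the boundary relation \eqref{eq: def-u*} is used to absorb the $\log(y-1)$ term.

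The two points you flag as unresolved are real but are also the weak points of the paper's own argument: the uniform bound on $E_f(1,p)$ is asserted rather than proved in the paper, and the critical regime is handled there only at the single point $C=r_\mathrm{cr}^p$ by truncating the Euler--Maclaurin range to $[1,p-1]$ (where $f$ is still differentiable) and using $u_{p-1}\to1-2\rho$ --- you could adopt that device rather than a local analysis of the square-root singularity of $v$. As written, your argument establishes the claim on compact sets avoiding the circle $|C|=r_\mathrm{cr}^p$, which is marginally weaker than the statement but consistent with how the estimate is actually used downstream.
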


\begin{proof}
    Recall the function $v$ defined in lemma \ref{lem: implicit} and let
    $$
    f(z) := \log (1+v((-C)^{\frac{1}{p}}e^{i\frac{(2z-1)\pi}{p}})).
    $$
    Then $f(j) = \log(1+u_j(C))$ for $j=1$, \dots, $p$.
    We divide the region of $C$ into three parts according to the topology of Cassini ovals: (\romannumeral1) supercritical  $C \le -r_\mathrm{cr}^p$, (\romannumeral2) subcritical  $|C| < r_\mathrm{cr}^p$, and  (\romannumeral3) critical $C = r_\mathrm{cr}^p$.
    
    In case (\romannumeral1), by lemma \ref{lem: implicit}, $f$ is differentiable in $x \in [1,p]$. Hence, Theorem~\ref{Euler-Maclaurin sum theorem} yields that
    $$
    G_N(C) = \frac{1}{p} \int_1^p f(z) \ dz + \frac{E_f(1,p)}{p} - \log2.
    $$
    Recall the Cassini contour $\mathcal{C}(r)$ defined by \eqref{eq: Cassini contours}. Applying Lemma \ref{lemB1: change of variables},
    $$
    \frac{1}{p} \int_1^p f(z)\ dz = \frac{1}{2\pi i \rho} \int_{\Gamma(r)} \frac{(u-1+2\rho)\log (1+u)}{u^2-1} \ du,
    $$
    where $\Gamma(r)$ is the path going from $u_1(C)$ to $u_p(C)$ counter-clockwise along $\mathcal{C}(r)$. Observe that $u_1(C)$ and $u_p(C)$ are complex conjugates. Computing the integral with Cauchy residue theorem (see Lemma \ref{lemB2:contour integration}),
    $$
    \frac{1}{p} \int_1^p f(z) \ dz = \log2 + G_\infty\big(u_p(C)\big).
    $$
    Hence,
    $$
    G_N(C) = G_\infty\big(u_p(C)\big) + \frac{E_f(1,p)}{p}.
    $$
    From the definition in \eqref{eq: EM error term}, $E_f(1,p)$ is bounded and continuously dependent on $C$. This, together with Lemma \ref{lemB3: asymptotis of IN} then yields that
    $$
    G_N(C) = G_\infty\big(u_*(C)\big) + O_C\big(N^{-1}\big),
    $$
    where the remainder is continuously dependent on $C$.

    In case (\romannumeral2), $f$ is everywhere differentiable, so
    \begin{equation} \label{eq:GNest_subcr}
         G_N(C) = \frac{1}{p} \int_0^p f(z)\ dz + \frac{E_f(0,p) - f(0)}{p} - \log2.
    \end{equation}
  
    Since $f(0) = f(p)$, the same change of variables as in Lemma \ref{lemB1: change of variables} gives
    $$
    \frac{1}{p} \int_0^p f(z)\ dz = \frac{1}{2\pi i \rho} \oint_{\mathcal{C}_+(r)} \frac{(u-1+2\rho)\log(1+u)}{u^2-1}\ du = \log2,
    $$
    where $\mathcal{C}_+(r)$ is the closed right oval of the Cassini contour \eqref{eq: Cassini contours}, and the last equation follows from the Cauchy residue theorem. The result then follows immediately.
    
    In the remaining case (\romannumeral3), $f$ is \emph{not} differentiable at $u_p(C) = 1-2\rho$, so we can only apply Theorem~\ref{Euler-Maclaurin sum theorem} on $[1,p-1]$ to obtain
    $$
    G_N(C) = \frac{1}{p} \int_1^{p-1} f(z)\ dz + \frac{E_f(1,p-1) + f(p)}{p} - \log2.
    $$
    Since $u_1(C)$ and $u_{p-1}(C)$ are complex conjugates,
    $$
    \frac{1}{p} \int_1^{p-1} f(z)\ dz = \log2 + G_\infty\big(u_{p-1}(C)\big),
    $$
    by Lemma \ref{lemB2:contour integration}.
    As $N\rightarrow\infty$, $u_{p-1}$ converge to $u_*(r_\mathrm{cr})=1-2\rho$, so it follows from \ref{lemB3: asymptotis of IN} that
    $$
    G_N(C) = G_\infty\big(u_*(r_\mathrm{cr})\big) + O_C(N^{-1}) = O_C(N^{-1}).
    $$
    The asymptotic estimate \eqref{eq: asym-g} is then verified. The other assertion \eqref{eq: asym-la} is proved in the same way.
\end{proof}

\begin{proof}[Proof of \eqref{eq: asym-1+} in Theorem~\ref{thm: asym+} ] \label{proof asymG}
By Proposition~\ref{prop:C bounded}, the sequence of $r_N=r_N(\gamma)$ is uniformly bounded, so \eqref{eq: log cc1} and \eqref{eq: asym-g} yield that its limit point $r_*=r_*(\gamma)$ shall satisfy the asymptotic equation
$$
\gamma = \mathbf1_{\{r>r_\mathrm{rc}\}} G_\infty\big(u_*(r_*)\big).
$$
The derivative in $r$ 
$$
\frac{d}{dr} G_\infty\big(u_*(r)\big) = \frac{1}{\pi r} \arg\big(1+u_*(r)\big) + o(1) > 0,
$$
shows that $G_\infty(u_*(r))$ is strictly increasing. Therefore, we conclude that the solution to the asymptotic equation is unique $r_*(\gamma) > r_\mathrm{rc}$.  Therefore, combining \eqref{eq: la1}, \eqref{def LN}, \eqref{eq: asym-la} we obtain the largest eigenvalue expansion \eqref{eq: asym-1+} in the thermodynamic limit.
\end{proof}

\begin{example}
    In the case of half-filling $N=2p$, we substitute $u_*(r(\gamma))$ from \eqref{half-fillr*} to get
    \begin{equation}
    \Lambda(\gamma, \frac 12):= \frac{1}{\pi} \left\{ \frac{\sqrt{r_*(\gamma)-1}}{2} + (\rho-1)\arctan(\sqrt{r_*(\gamma)-1}) \right\}.
\end{equation}
\end{example}

\subsubsection{The spectral gap }

From Assumption \ref{ass: gammapos}, the second largest eigenvalue $\lambda_2(\gamma)$ is given by one of the two possible sets $A$ in \eqref{choice2}. In this section we consider only
$$
A=\{1,2,\ldots,p-1,p+1\},
$$
since the other one can be treated similarly.

With some abuse of notation, let $\tilde{C}_{N,\gamma}$ be the solution to \eqref{eq: consistency} corresponding to $\lambda_2(\gamma)$, and denote
\begin{equation}\label{eq: Zj_tilde}
\tilde{Z}_j = u_j\big(\tilde{C}_{N,\gamma}\big), \qquad j=1,\ldots,p+1,
\end{equation}
as in Lemma \ref{lem: implicit}, the function $w$ then sends these Bethe roots to
$$
w\big(\tilde{Z}_j\big) = \tilde{w}_j := \big(\!-\tilde{C}_{N,\gamma}\big)^{\frac{1}{p}} e^{i\frac{(2j-1)\pi}{p}}, \qquad j=1,\ldots,p.
$$
Also notice that $w(\tilde{Z}_{p+1}) = w_1$, therefore
$$
\prod_{j \in A} \frac{1-\tilde{Z}_j}{1+\tilde{Z}_j} \prod_{j \in A} \big(1+\tilde{Z}_j\big)^{\frac{N}{p}} = \prod_{j \in A} w(\tilde{Z}_j) = (-1)^{p-1}\tilde C_{N,\gamma} e^{\frac{2\pi i}{p}}.
$$
Similarly to \eqref{eq: log cc1}, this simplifies the consistency equation \eqref{eq: consistency} to
\begin{equation} \label{eq: log cc2}
    \gamma = \frac{1}{p}\sum_{j \in A} \log\big(1+u_j(\tilde{C}_{N,\gamma})\big) - \log2 - \frac{2\pi i}{Np}.
\end{equation}
By the same argument as used in the previous section, \eqref{eq: log cc2} transforms in the thermodynamic limit $N \rightarrow \infty$ to \eqref{eq: asym-cc}, indicating that the limit Cassini contours for $\lambda_1(\gamma)$ and $\lambda_2(\gamma)$ coincide
$$
\tilde{r}_{N,\gamma} := \big|\tilde{C}_{N,\gamma}\big|^{\frac{1}{p}} \rightarrow r_*(\gamma), \qquad N \rightarrow \infty.
$$
It means that as $N \rightarrow \infty$, $\tilde{r}_{N,\gamma} = (1+o(1))r_{N,\gamma}$. We prove \eqref{eq: asym-gap+} by estimating the asymptotic proximity of $\tilde{r}_{N,\gamma}$ to $r_{N,\gamma}$.

\begin{proof}[Proof of \eqref{eq: asym-gap+} in Theorem~\ref{thm: asym+}] \label{proof of asymLambda}
    Subtracting \eqref{eq: log cc1} from \eqref{eq: log cc2},
    \begin{equation} \label{eq: log cc2/cc1}
        \frac{1}{p}\sum_{j=1}^{p-1} \log(\frac{\tilde{Z}_j+1}{Z_j+1}) = \frac{1}{p}\log(\frac{Z_p+1}{\tilde{Z}_{p+1}+1}) + \frac{2 \pi i}{Np}.
    \end{equation}
    By Lemma \ref{lem: implicit}, $Z_j = u_j(C_{N,\gamma}) = v(w_j)$ for $j=1, \dots, p$ and similarly for $\tilde{Z}_j$. Hence, we denote
    $$
    \tilde{w}_j = (1+\xi)w_j,
    $$
    where $\xi=\xi_N=o(1)$ as $N \rightarrow \infty$ and is independent of $j$.
    Computing the derivatives of $v$ from the relation $(1-v)^p(1+v)^{N-p} = w^p$ and applying Taylor expansion to $v((1+\xi)w_j)$ in $\xi$, we have
    \begin{equation} \label{eq: serie Z_m}
        \begin{aligned}
            \tilde{Z}_j = Z_j &+ \frac{\xi \rho(Z_j^2-1)}{Z_j -1 + 2\rho} + o(|\xi|),  \qquad \text{for } j=1, \dots, p. 
        \end{aligned}
    \end{equation}
    Since $\tilde{r}_{N,\gamma} > r_\mathrm{cr}$ for $N$ sufficiently large, by expanding along the Cassini contour we obtain
    \begin{equation} \label{eq: Zp+1 serie}
        \tilde{Z}_{p+1} = \tilde{Z}_{p} + \frac{2 \pi i}{N} \frac{\tilde{Z}_p^2-1}{\tilde{Z}_p-1+2 \rho} + o \left( \frac{1}{N} \right).
    \end{equation}
    Inserting \eqref{eq: serie Z_m} and \eqref{eq: Zp+1 serie} into \eqref{eq: log cc2/cc1}, the right-hand reads
    $$
    \begin{aligned}
        \text{(RHS)} &= \frac{4\pi i \rho }{Np(Z_p-1+2\rho)} + o(|\xi|) + o \left( \frac{1}{N^2} \right)\\
        &= \frac{4\pi i \rho }{Np(u_*(r_*)-1+2\rho)} + o(|\xi|) + o \left( \frac{1}{N^2} \right),
    \end{aligned}
    $$
    where $u_*$ is determined by \eqref{eq: def-u*} and $r_* = r_*(\gamma)$ is the solution to \eqref{eq: asym-cc}, and we used the fact that $Z_p \rightarrow u_*(r_*)$ as $N\rightarrow\infty$.
    The left-hand side of \eqref{eq: log cc2/cc1} is equal to
    $$
    \text{(LHS)} = \frac{1}{p} \sum_{j=1}^{p-1} \frac{\xi\rho(Z_j-1)}{Z_j-1+2\rho} + o(|\xi|),
    $$
    where we apply the Lebesgue dominated convergence theorem to justify the interchange of the limit with the summation, since $C$ and $\tilde{C}$ are uniformly bounded, and $Z_j = u_j(C_{N,\gamma})$ and $\tilde{Z}_j = u_j(\tilde{C}_{N,\gamma})$ converge pointwise to some point of a limiting bounded curve parametrized by $j$.
    Using the argument in the proof of Proposition~\ref{prop: asym}, we can furthermore rewrite it as
    $$
    \begin{aligned}
        \text{(LHS)} &= \xi\rho \left[ \frac{1}{2\pi i \rho} \int_{\bar{u}_*(r_*)}^{u_*(r_*)} \frac{1}{u+1}\ du + o \left( \frac{1}{N} \right) \right] + o(|\xi|)\\
        &= \frac{\xi}{\pi} \arg\big(1+u_*(r_*)\big) + o(|\xi|) + o \left( \frac{1}{N^2} \right).
    \end{aligned}
    $$
    Hence, the leading order terms of \eqref{eq: log cc2/cc1} yield that
\begin{equation} \label{eq: xi found}
    \xi = \frac{i}{N^2} \cdot \frac{4\pi^2}{(u_*(r_*)-1+2\rho) \arg(1+u_*(r_*))} + o \left( \frac{1}{N^2} \right).
\end{equation}
    
    Finally, the spectral gap is presented as
    $$
    \begin{aligned}
        \lambda_1(\gamma) - \lambda_2(\gamma) &= \frac{1}{2}\sum_{j=1}^{p} \big(Z_j-\tilde{Z}_j\big) + \frac{1}{2}\big(\tilde{Z}_p - \tilde{Z}_{p+1}\big)\\
        &= \frac{1}{2}\sum_{j=1}^p \frac{\xi\rho(1-Z_j^2)}{Z_j-1+2\rho} + \frac{\pi i}{N} \frac{1-\tilde{Z}_p^2}{\tilde{Z}_p-1+2\rho} + o \left( \frac{1}{N} \right).
    \end{aligned}
    $$
    Applying the same argument as used in the proof of Proposition~\ref{prop: asym} to replace the summation by a contour integral,
    $$
    \lambda_1(\gamma) - \lambda_2(\gamma) = -\frac{N\xi\rho}{2\pi} \Im(u_*) + \frac{\pi i}{N} \frac{1-u_*^2}{u_*-1+2\rho} + o \left( \frac{1}{N} \right).
    $$
    The proof is then completed by substituting the formula \eqref{eq: xi found} of $\xi$. The expansion of the spectral gap \eqref{eq: asym-gap+} follows. 
\end{proof}

\begin{example} In the case of half-filling $N=2p$, see Examples \ref{half-fill} and \ref{ex:half-fill_r*}, we obtain 
\begin{equation} 
		\begin{split}
				 \lambda_1(\gamma) - \lambda_2(\gamma) = \frac{\pi}{N \sqrt{r_*(\gamma) -1}}\left(r_*(\gamma) - \frac{\sqrt{r_*(\gamma)-1}}{\arctan(\sqrt{r_* (\gamma)-1})}\right). 
			\end{split}
		\end{equation} 
\end{example}

\subsection{Asymptotic analysis for $\gamma<0$}

\subsubsection{The largest eigenvalue}
We employ the same analysis strategy: let set $A = \{1, \dots, p-1, N\}$ following Assumption \ref{ass: gammaneg} and the Proposition~\ref{prop: C}. This proposition indicates that the Cassini oval consists of two ovals; moreover, the positivity of $C_{N, \gamma}$ means $u_p(C_{N, \gamma})$ (the leftmost point on the right oval) and $u_N(C_{N, \gamma})$ (the rightmost point of the left oval) are real-valued, see Figure \ref{fig:  Cassini contour with sep curve} a).
The other choices in \eqref{choice2}, \eqref{eq:first-choice}, \eqref{eq:second-choice} do not yield a simple real eigenvalue. We linearize the consistency condition, define the relevant asymptotic contour, and extract the leading behaviour of the largest eigenvalue.

\begin{remark}
    When $\gamma < 0$, the Assumption \ref{ass: gammapos} regarding the choice of $A = \{1, \dots, p\}$, which works for $\gamma>0$, does not lead to a solution for large $N$. Indeed, the function $\mathcal{G}_{\infty}(u_*(r))$ on the right-hand side of the equation \eqref{eq: asym-cc} limits to zero for subcritical $r$ and takes finite positive values for supercritical $r$, which cannot match the negative finite value of $\gamma$ on the left-hand side. 
\end{remark}

Similarly, the consistency condition \eqref{eq: consistency} simplifies to the shift-invariance condition with zero momentum given in \eqref{def momentum}.
\begin{equation}
(u_{N}(C_{N,\gamma})+1) \prod_{m=1}^{p-1}  \Big( u_m(C_{N,\gamma}) +1\Big) = 2^p e^{\gamma p}.
\end{equation}
Taking the logarithm linearizes the equation.
\begin{equation}
\frac 1p \log (u_{N}(C_{N,\gamma})+1) + \frac 1p  \sum_{m=1}^{p-1} \log \Big( u_m(C_{N,\gamma}) +1\Big) = \log 2 + \gamma.
\end{equation}
Extending the sum to range from $1$ to $p$, we express the last equation in terms of the function $\mathcal{G}_N(C)$ defined in \eqref{def GN}.
\begin{equation} \label{eq: CCasympt neggamma}
\frac 1p \log \Big( \frac{u_N(C_{N,\gamma}) +1}{u_p(C_{N,\gamma})+1}\Big) +\mathcal{G}_N(C_{N,\gamma})  = \gamma.
\end{equation}
\begin{proposition} \label{prop: CN(gamma) asympt serie}
    For large $N$, the equation \eqref{eq: CCasympt neggamma} has a unique solution $C_{N, \gamma}^*$ that approaches zero exponentially fast, with the following leading asymptotic behaviour.
    \begin{equation} \label{eq: CN asymptotics}
    C_{N, \gamma}^* = e^{N^2 \gamma \rho(1-\rho)} 2^N   + o( e^{N^2 \gamma \rho(1-\rho)}2^N ).
\end{equation}
The largest eigenvalue asymptotic expansion follows.
\begin{equation} \label{la1negres}
    \lambda_1(\gamma) =  -1 + e^{N \gamma \rho}+e^{N \gamma (1-\rho)}+  o(e^{N \gamma \rho}+e^{N \gamma (1-\rho)}).
\end{equation}
\end{proposition}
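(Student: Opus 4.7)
The plan is to treat the consistency condition \eqref{eq: CCasympt neggamma} as a perturbation problem in the small parameter $C$. By Proposition~\ref{prop: C}, for $\gamma<0$ we have $C_{N,\gamma}^*\in(0,r_\mathrm{cr}^p)$, so the Cassini contour splits into two disjoint ovals: the right one shrinks to $u=1$ and the left one to $u=-1$ as $C\downarrow 0$. All of $u_1(C),\dots,u_p(C)$ lie on the right oval, while $u_N(C)$ is the rightmost (real) point of the left oval. The strategy is to derive sharp small-$C$ expansions of the relevant roots, substitute them into the consistency equation, identify the dominant balance to obtain \eqref{eq: CN asymptotics}, and finally substitute into $\lambda_1=\tfrac12\sum_{j\in A}(Z_j-1)$ to recover \eqref{la1negres}.

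For the root expansions I would use Lemma~\ref{lem: implicit}. Near $u=1$, $w(u)=(1-u)(1+u)^{N/p-1}\approx 2^{N/p-1}(1-u)$, so inverting $w(u_j)=(-C)^{1/p}e^{i(2j-1)\pi/p}$ yields
\begin{equation*}
u_j(C)=1-(C/2^{N-p})^{1/p}e^{i2\pi j/p}+O(C^{2/p}),\qquad j=1,\ldots,p,
\end{equation*}
using $(-C)^{1/p}=C^{1/p}e^{i\pi/p}$ for $C>0$. Analogously, near $u=-1$, solving $(2-\delta)^p\delta^{N-p}=C$ with $\delta=u+1$ gives
\begin{equation*}
u_N(C)=-1+(C/2^p)^{1/(N-p)}\bigl(1+o(1)\bigr).
\end{equation*}
The key cancellation $\sum_{j=1}^p e^{i2\pi kj/p}=0$ for $1\le k\le p-1$ kills all intermediate terms of the Taylor expansion of $\sum_{j=1}^p\log(1+u_j)$ around $u_j=1$, so $G_N(C)$ becomes a subleading correction in the small-$C$ regime. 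The dominant term in \eqref{eq: CCasympt neggamma} is
\begin{equation*}
\frac{1}{p}\log\!\left(\frac{u_N(C)+1}{u_p(C)+1}\right)=\frac{\log(C/2^N)}{p(N-p)}+o(N^{-1}).
\end{equation*}
Equating the LHS of \eqref{eq: CCasympt neggamma} to $\gamma$ and solving,
\begin{equation*}
\log(C/2^N)=\gamma\,p(N-p)\bigl(1+o(1)\bigr)=\gamma\,N^2\rho(1-\rho)\bigl(1+o(1)\bigr),
\end{equation*}
which is precisely \eqref{eq: CN asymptotics}. Uniqueness follows from strict monotonicity of the dominant $\log(C/2^N)$ term on $(0,r_\mathrm{cr}^p)$: it increases from $-\infty$ to $0$ and so crosses $\gamma<0$ at a unique value of $C$, while the higher-order $G_N$ perturbation cannot reverse this crossing for $N$ large.

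Finally, I substitute the leading form of $C_{N,\gamma}^*$ into $\lambda_1(\gamma)=\tfrac12\sum_{j=1}^{p-1}(u_j-1)+\tfrac12(u_N-1)$. Using $\sum_{j=1}^{p-1}e^{i2\pi j/p}=-1$,
\begin{equation*}
\tfrac12\sum_{j=1}^{p-1}\bigl(u_j-1\bigr)=\tfrac12(C/2^{N-p})^{1/p}\bigl(1+o(1)\bigr)=e^{\gamma N(1-\rho)}\bigl(1+o(1)\bigr),
\end{equation*}
after inserting $C_{N,\gamma}^*\sim 2^N e^{\gamma N^2\rho(1-\rho)}$; likewise $\tfrac12(u_N-1)=-1+e^{\gamma N\rho}(1+o(1))$. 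Adding yields \eqref{la1negres}.

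The main obstacle is rigorously controlling the error terms. Since $C_{N,\gamma}^*$ is super-exponentially small in $N$, one must verify that both the $O(C^{2/p})$ perturbative root corrections and the full $G_N(C)$ contribution leave the leading asymptotic balance intact. Direct estimates such as $C^{2/p}\sim 2^{2/\rho}e^{2\gamma N(1-\rho)}$ show these remainders are exponentially smaller than the principal terms $e^{\gamma N\rho}$ and $e^{\gamma N(1-\rho)}$ when $\gamma<0$, but making this uniform in $N$ requires care. A second delicate point is the global uniqueness claim on $(0,r_\mathrm{cr}^p)$, where $G_N$ is not negligible and strict monotonicity of the full LHS of \eqref{eq: CCasympt neggamma} needs an independent argument.
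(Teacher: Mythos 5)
Your proposal is correct and follows essentially the same approach as the paper: small-$C$ expansions of the roots $u_j$ near $\pm 1$, dominant balance of $\frac{1}{p(N-p)}\log(C/2^N)\approx\gamma$ in the consistency equation to obtain $C_{N,\gamma}^*$, and substitution into $\lambda_1=\tfrac12\sum_{j\in A}(u_j-1)$. The one cosmetic difference is your direct use of the root-of-unity identity $\sum_{j=1}^{p-1}e^{2\pi ij/p}=-1$ to evaluate the partial sum, whereas the paper rewrites the eigenvalue as $\tfrac12\sum_{j=1}^{p}(u_j-1)+\tfrac12(u_N-u_p)$ and argues the full sum is $o(e^{\gamma N(1-\rho)})$ by cancellation between its integral approximation and error term.
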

\begin{proof}
For a non-zero, finite value of $ C_{N, \gamma}$, the first term of the left-hand side of the equation \eqref{eq: CCasympt neggamma} exhibits the following asymptotic behaviour as $N$ grows
\begin{equation}
    \frac{1}{p}\log \Big( \frac{u_N(C) +1}{u_p(C)+1}\Big) = \frac{1}{p} \begin{cases}
        \log (u_N(C) + 1) + O(1), & C < r_{\mathrm{cr}}^{1/\rho},\\
        O(1),& C \geq r_{\mathrm{cr}}^{1/\rho}.
    \end{cases}
\end{equation}
As $C$ decreases, this term diverges to negative infinity. 
For large values of $N$, the function $\mathcal{G}_N(C)$ estimate \eqref{eq:GNest_subcr} yields that $\mathcal{G}_N(C)$ is a subdominant term of order $O(|C|^{\frac{1}{p}}N^{-3})$ compared to $\log (u_N(C_{N,\gamma}) + 1)$ for any subcritical value of $C$ including $C=0$. 

 As a result, it does not contribute to the leading asymptotic behaviour. In contrast, for supercritical values of $C$,  $\mathcal{G}_N(C)$ provides the leading asymptotic behaviour of the left-hand side of equation \eqref{eq: CCasympt neggamma}. However, since \( \mathcal{G}_N(C) \) takes only positive real values, it cannot equal the negative values of $\gamma$.

The equation \eqref{eq: CCasympt neggamma}   has the following leading-order representation 
\begin{equation} \label{eq: CC_lead_order}
    \frac 1p \log \bigl(u_N(C_{N,\gamma}) + 1\bigr) + O\bigl((C_{N,\gamma})^{\frac 1p}N^{-3}\bigr) = \gamma + \frac 1p \log \bigl(u_p( C_{N,\gamma}) + 1\bigr).
\end{equation}
As $N$ increases, $u_N(C_{N,\gamma})$ should approach $-1$ monotonically with $C_{N, \gamma}$ approaching zero; therefore, there exists a unique asymptotic solution $C_{N, \gamma}^*$ limiting to zero as $N$ grows. From the defining equation of a Cassini oval $|1-u|^{\rho}|1+u|^{1-\rho}= C^{1/N}$, we derive the following series expansions at $u=1$ and $u=-1$, respectively.
\begin{equation} 
    u_p(C) = 1 - 2^{-\frac{1-\rho}{\rho}} C^{\frac 1p} + O(C^{\frac 2p}) , \qquad C \rightarrow 0,
\end{equation}
\begin{equation} 
    u_N(C) = -1 + 2^{-\frac{\rho}{1-\rho}}C^{\frac 1p \frac{\rho}{1-\rho}} + O\left(
    C^{\frac 2p \frac{\rho}{1-\rho}}
    \right) , \qquad C \rightarrow 0.
\end{equation}
By substituting the last two series into \eqref{eq: CC_lead_order}, we find $C_{N,\gamma}$ converges to zero exponentially fast, as described in \eqref{eq: CN asymptotics}. 
Consequently, we obtain the following results
\begin{equation} \label{eq:up asypt}
    u_p(C_{N,\gamma}^*) = 1 - 2 \ e^{N\gamma (1-\rho)} + O(e^{2N\gamma (1-\rho)}), \qquad  N\rightarrow +\infty.
\end{equation}
\begin{equation} \label{eq:uN asypt}
    u_N(C_{N,\gamma}^*) = -1 + 2 \ e^{N \gamma \rho} + O(e^{2N \gamma \rho}), \qquad N\rightarrow +\infty.
\end{equation}

 From an asymptotic relation \eqref{eq: asym-la} and the series expansions for $u_p$ \eqref{eq:up asypt} and for $u_N$ \eqref{eq:uN asypt}, we find the largest eigenvalue expansion 
\begin{equation}
\begin{split}
    \lambda_1(\gamma) = \frac12\sum_{j=1}^{p} \big(u_j(C_{N,\gamma}^*)-1\big) + \frac{1}{2} \big(u_N(C_{N,\gamma}^*)-u_p(C_{N,\gamma}^*) \big)
    \\
    = o(e^{N \gamma (1-\rho)}) + \frac{1}{2} \left(u_N(C_{N,\gamma}^*) -u_p(C_{N,\gamma}^*) \right)  \\
    = -1 + e^{N \gamma \rho}+e^{N \gamma (1-\rho)}+  o(e^{N \gamma \rho}+e^{N \gamma (1-\rho)}),
\end{split}
\end{equation}
where in the second row the sum was estimated by an integral and an error term, both of order $O(e^{N \gamma (1-\rho)})$ and cancelling each other.
\end{proof}
\begin{corollary}
The leading term of $\lambda_1(\gamma)$ is universal with respect to the parameter $\gamma$. 

Moreover, each choice with $p-1$ points from the right oval and one point from the left oval delivers an eigenvalue with the same leading asymptotic as the largest eigenvalue.
\end{corollary}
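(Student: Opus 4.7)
The first assertion---that the leading term of $\lambda_1(\gamma)$ does not depend on $\gamma$---is a direct reading of Proposition~\ref{prop: CN(gamma) asympt serie}. The expansion \eqref{la1negres} exhibits a dominant contribution of exactly $-1$, independent of $\gamma$, while the subdominant terms $e^{N\gamma\rho}$ and $e^{N\gamma(1-\rho)}$ vanish as $N\to\infty$ for any fixed $\gamma<0$. I would dispatch this part in a single line.

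For the second assertion, my plan is to run the proof of Proposition~\ref{prop: CN(gamma) asympt serie} almost verbatim for any admissible subset. Fix $A = (\{1,\ldots,p\}\setminus\{k\})\cup\{j\}$ with $k\in\{1,\ldots,p\}$ and $j\in\{p+1,\ldots,N\}$. Using Lemma~\ref{lem: implicit} to express $w(u_i(C))=w_i$, the consistency equation \eqref{eq: consistency} reduces to a quantised condition
\[
(1+u_j(C))\prod_{i\in\{1,\ldots,p\}\setminus\{k\}}(1+u_i(C)) = 2^p e^{\gamma p}\,e^{2\pi i m/N}
\]
for some $m\in\{0,\ldots,N-1\}$ dictated by the cyclic structure of $A$. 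Multiplying and dividing by $(1+u_k(C))$ and taking logarithms produces the analogue of \eqref{eq: CCasympt neggamma}:
\[
\frac{1}{p}\log\!\left(\frac{u_j(C)+1}{u_k(C)+1}\right) + \mathcal{G}_N(C) = \gamma + O(N^{-1}).
\]

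Because $u_j$ lies on the left oval, $u_j(C)\to -1$ as $C\to 0$ and the logarithm diverges to $-\infty$; the value $u_k(C)$ sits on the right oval, so $\log(1+u_k(C))$ remains bounded; and by \eqref{eq:GNest_subcr} the function $\mathcal{G}_N(C)$ is subdominant in the subcritical regime. These are precisely the three ingredients used in the proof of Proposition~\ref{prop: CN(gamma) asympt serie}, so the equation admits a unique solution $C_{N,\gamma}^*\to 0$ at exponential speed. Inserting this into the eigenvalue formula \eqref{eq: ev via Z} and splitting off the left-oval index,
\[
\lambda = \frac{u_j(C_{N,\gamma}^*)-1}{2} + \frac{1}{2}\sum_{i\in A\setminus\{j\}}\bigl(u_i(C_{N,\gamma}^*)-1\bigr),
\]
the first term converges to $-1$ while the remaining sum vanishes exponentially, yielding $\lambda\to -1$ with the same leading asymptotic as $\lambda_1(\gamma)$.

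The hard part will be the bookkeeping. The precise exponential factors controlling $u_j(C)+1$ and $u_k(C)-1$ depend on how close $j$ is to the extremities $\{p+1,N\}$ of the left oval and how close $k$ is to $p$ on the right oval, through the local Taylor expansions of the Cassini defining polynomial near the foci $\pm 1$. Tracking which pair $(j,k)$ minimises these exponents---and hence produces the leading correction beyond the $-1$ plateau---requires a uniform version of the series expansions at the end of the proof of Proposition~\ref{prop: CN(gamma) asympt serie}. In particular, one must verify that the canonical choice $A=\{1,\ldots,p-1,N\}$ is optimal among admissible $A$, confirming the selection rule in Assumption~\ref{ass: gammaneg}.
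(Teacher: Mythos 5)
Your proposal is correct and follows essentially the same route the paper intends: the first assertion is read off directly from \eqref{la1negres}, and the second is obtained by re-running the proof of Proposition~\ref{prop: CN(gamma) asympt serie} for a general admissible set with one left-oval index, exactly as the paper does implicitly (and confirms in its closing remark that the focal radius asymptotics \eqref{eq: CN asymptotics} are unchanged for other left-oval choices). Your final paragraph on identifying the optimal pair $(j,k)$ concerns subleading corrections and the verification of Assumption~\ref{ass: gammaneg}, which goes beyond what the corollary actually asserts.
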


\subsubsection{The spectral gap}

Following the Assumption \ref{ass: gammaneg}, we consider the second-largest eigenvalue delivered by $A = \{1, \dots ,p-1,p+1\}$. The complex-conjugate eigenvalue comes from the choice $A = \{1, \dots ,p-1, N-1\}$. 

The difference between the largest eigenvalue  and the eigenvalue calculated with this selection is given by the equation
\begin{equation} \label{eq: spgap neg}
	2\left(\lambda_1(\gamma)-\lambda_2(\gamma) \right) = u_N(C_{N, \gamma})-u_{p+1}(\tilde{C}_{N, \gamma}) + \sum_{m=1}^{p-1} u_m(C_{N, \gamma})-\sum_{m=1}^{p-1} u_m(\tilde{C}_{N, \gamma}),
\end{equation}
where $\tilde{C}_{N, \gamma}$ is a parameter of the Cassini curve corresponding to $\lambda_2(\gamma)$.
The invariance condition \eqref{def momentum} with momentum $1$ simplifies to the following equation.
\begin{equation} \label{eq: CCasympt neggamma2}
\frac 1p \log \left( \frac{u_{p+1}(\tilde{C}_{N, \gamma}) +1}{u_p(\tilde{C}_{N, \gamma})+1}\right) +\mathcal{G}_N(\tilde{C}_{N, \gamma})  = \gamma + \frac{2 \pi i}{Np}.
\end{equation}

\begin{proposition}
For large $N$, the equation \eqref{eq: CCasympt neggamma2} has a unique solution $\tilde{C}_{N, \gamma}^*$  that approaches zero exponentially fast, following the leading asymptotic.
    \begin{equation} \label{eq: tildeCN asymptotics}
    \tilde{C}_{N, \gamma}^* = e^{N^2 \gamma \rho (1-\rho)} 2^{N} + o\left(e^{N^2 \gamma \rho (1-\rho)} 2^{N}\right).
    \end{equation}
The argument of the focal radius of a corresponding Cassini curve is 
\begin{equation} \label{eq: theta asymp}
\theta  = -\frac{2 \pi}{N} + o(N^{-1}).
\end{equation}
\end{proposition}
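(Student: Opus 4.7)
The plan is to follow the template of Proposition \ref{prop: CN(gamma) asympt serie}, adapting it to account for the complex-valued nature of $\tilde C_{N,\gamma}$ and the additional imaginary contribution $\frac{2\pi i}{Np}$ on the right-hand side of \eqref{eq: CCasympt neggamma2}. I would parametrize $\tilde C_{N,\gamma} = r^p e^{ip\theta}$ with $r > 0$ and $\theta \in (-\pi/p, \pi/p]$, and split \eqref{eq: CCasympt neggamma2} into its real and imaginary parts, which together determine the two unknowns $r$ and $\theta$.

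For the real part, I expect the argument of Proposition \ref{prop: CN(gamma) asympt serie} to transfer essentially verbatim. Since $\mathcal G_N(\tilde C)$ remains subdominant in the subcritical regime (of order $O(r\,N^{-3})$), the dominant balance is between $\frac{1}{p}\log|u_{p+1}+1|$ and $\gamma+\frac{1}{p}\log|u_p+1|$. Combining $|u_p+1|=2+O(r^{1/\rho})$ with the Cassini oval asymptotic $|u_{p+1}+1|=2^{-\rho/(1-\rho)}r^{\rho/(1-\rho)}(1+o(1))$ yields $\log r = N(1-\rho)\gamma + \rho^{-1}\log 2 + o(1)$, and hence $|\tilde C_{N,\gamma}^*| = r^p = 2^N e^{N^2\gamma\rho(1-\rho)}(1+o(1))$, which is exactly \eqref{eq: tildeCN asymptotics}. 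This is consistent with the fact that the labels used for $\lambda_1$ and $\lambda_2$ differ by only one step on the same left oval, so their moduli must agree to leading order.

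The new information enters through the imaginary part. Using the refined local expansion
$$u_{p+1}(\tilde C) + 1 = 2^{-\rho/(1-\rho)}\,\tilde C^{1/(N-p)}\,\zeta + O\bigl(r^{2\rho/(1-\rho)}\bigr),$$
where $\zeta$ is the $(N-p)$-th root of unity selected by the label $p+1$ via the indexing function $F$ of Section \ref{geo-bethe}, and using $\arg(u_p+1)=O(r^{1/\rho})$ together with $\mathrm{Im}\,\mathcal G_N = o(N^{-1})$, the leading imaginary balance reduces to
$$\frac{\theta}{N-p} + \frac{\arg\zeta}{p} = \frac{2\pi}{Np} + o(N^{-1}).$$
A careful tracking of the branch structure shows that the label $p+1$ is the immediate neighbour of the label $N$ on the left oval, giving $\arg\zeta = 2\pi/(N-p)$. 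Substituting this value and simplifying produces $\theta = -\frac{2\pi}{N} + o(N^{-1})$, which is \eqref{eq: theta asymp}.

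The main obstacle I anticipate is the rigorous identification of the root of unity $\zeta$. This requires a careful analysis of how the branch of the solution map $u_{p+1}(\cdot)$ continues as $\tilde C$ leaves the positive real axis: one must verify that consecutive labels on the left oval contribute angular shifts of exactly $2\pi/(N-p)$ with the correct sign, using the definition of the labeling function $F$ in Section \ref{geo-bethe}. Once $\zeta$ is pinned down, uniqueness of the pair $(r,\theta)$ follows from the non-degeneracy of the Jacobian of the real/imaginary decomposition in the asymptotic regime, which inherits the strict monotonicity already used for the $\lambda_1$ analysis.
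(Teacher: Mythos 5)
Your proposal follows essentially the same route as the paper: expand $u_p$ and $u_{p+1}$ near $\pm1$ for small $|\tilde C|$, substitute into the consistency condition, recover the modulus from the same dominant balance as in Proposition \ref{prop: CN(gamma) asympt serie}, and extract $\theta$ from the angular (imaginary) part, with the branch of $u_{p+1}$ contributing the phase $e^{2\pi i/(N-p)}$ exactly as in the paper's expansion. One small slip: the error in your imaginary-part balance must be $o(N^{-2})$ (it is in fact exponentially small, coming from $\arg(u_p+1)$ and $\Im\mathcal G_N$), since writing $o(N^{-1})$ there is too weak to conclude $\theta=-\tfrac{2\pi}{N}+o(N^{-1})$ after multiplying through by $N-p$.
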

\begin{proof}
    Similar to the proof of the Proposition~\ref{prop: CN(gamma) asympt serie}, we demonstrate that for large $N$, the equation \eqref{eq: CCasympt neggamma2} has a unique solution, denoted as $\tilde{C}_{N, \gamma}^*$, which approaches zero as $N$ increases. Similarly, this conclusion follows from the series expansion in $\tilde{C}_{N, \gamma}^*$ limiting to zero as $N$ grows. From the defining polynomial equation $(1-u)^{\rho}(1+u)^{1-\rho}= |\tilde{C}|^{1/N} e^{i( \theta \rho+ \pi)}$ with $C$ small, we derive the following series expansions at $u=1$ and $u=-1$, respectively.
\begin{equation} 
    u_k(\tilde{C}) = 1 -2^{-\frac{1-\rho}{\rho}} |\tilde{C}|^{\frac 1p} e^{i \theta} e^{\frac{2 \pi i}{p}(k-p)}+ O(|\tilde{C}|^{\frac 2p}) , \quad  k = 1, \dots, p.
\end{equation}
\begin{equation} 
    u_{k}(\tilde{C}) = -1 + 2^{-\frac{\rho}{1-\rho}} \Big(|\tilde{C}|^{\frac 1p} e^{i\theta}\Big)^{\frac{\rho}{1-\rho}}
    e^{\frac{2 \pi i}{N-p}(k-p)} + O\left(|\tilde{C}|^{\frac 2p \frac{\rho}{1-\rho}}\right) , \ k = p+1, \dots, N
\end{equation}
By substituting these expansions into equation \eqref{eq: CCasympt neggamma2}, and taking into account that we defined $\theta  := p^{-1} \arg(-C) = o(1)$, we derive the following asymptotic equation
\begin{equation} 
\begin{split}
\Big(\frac{|\tilde{C}|^{\frac 1p} e^{i\theta}}{2}\Big)^{\frac{\rho}{1-\rho}}
    e^{\frac{2 \pi i}{N-p}} &+ O\left(|\tilde{C}|^{\frac 2p \frac{\rho}{1-\rho}} \right) \\
    &= e^{p\gamma + \frac{2 \pi i}{N} + O(N^{-2}|\tilde{C}|^{\frac 1p})} \Big(2 -  \frac{|\tilde{C}|^{\frac 1p} e^{i \theta} }{2^{-\frac{1-\rho}{\rho}}}e^{\frac{2 \pi i}{p}(k-p)}+ o(|\tilde{C}|^{\frac 1p})\Big).
    \end{split}
\end{equation}
The leading asymptotic of $C_{N,\gamma}^*$ has the same absolute value as that obtained for the largest eigenvalue as stated in Proposition~\ref{prop: CN(gamma) asympt serie}. For the argument parameter $\theta$, we find \eqref{eq: theta asymp}.
\end{proof}
From this proposition, we obtain the expansion series in large $N$ 
\begin{equation} 
    u_k(C_{N,\gamma}^*) = 1 - 2 e^{N \gamma (1-\rho)} e^{-\frac{2 \pi i}{N}+\frac{2 \pi i}{p}(k-p)}+ O(e^{2N \gamma (1-\rho)}), \quad  k = 1, \dots, p,
\end{equation}
\begin{equation} 
    u_k(C_{N,\gamma}^*) = -1 + 2 e^{N \gamma \rho} e^{-\frac{2 \pi i}{N}\frac{\rho}{1-\rho}}
    e^{\frac{2 \pi i}{N-p}(k-p)}+ O\left(e^{2N \gamma \rho}\right), \quad k = p+1, \dots, N.
\end{equation}

For the spectral gap \eqref{eq: spgap neg}, we have two exponentially small contributions: the first one comes from $u_N(C_{N, \gamma})-u_{p+1}(\tilde{C}_{N, \gamma})$ and the second one from the difference in the sums $\sum_{j=1}^{p-1} (u_j(C_{N, \gamma})- u_{j}(\tilde{C}_{N, \gamma})) = u_p(C_{N, \gamma})-u_{p}(\tilde{C}_{N, \gamma})$.
\begin{equation}
\begin{split}
    \lambda_1(\gamma) - \lambda_2(\gamma) &= e^{N \gamma \rho} \biggl(1 - e^{\frac{2 \pi i}{N}}\biggr) +  e^{N \gamma (1-\rho)} \biggl(1 - e^{-\frac{2 \pi i}{N}}\biggr) \\
    &\quad + O\bigl(e^{2N \gamma \rho} + e^{2N \gamma (1-\rho)}\bigr) \\
    &= \frac{2\pi i}{N} \Bigl(- e^{N \gamma \rho} + e^{N \gamma (1-\rho)} \Bigr) + \frac{4\pi^2}{N^2}  \Bigl(e^{N \gamma \rho} + e^{N \gamma (1-\rho)}\Bigr) \\
    &\quad + O\bigl(e^{2N \gamma \rho} + e^{2N \gamma (1-\rho)}\bigr).
\end{split}
\end{equation}
\begin{remark}
    If we choose a point other than $Z_{p+1}$ from $Z_{p+2}, \dots, Z_{N-1}$ in the left oval, the leading asymptotic behaviour of the focal radius is still described by \eqref{eq: CN asymptotics}.
\end{remark}
\newpage
\appendix 

\newpage
\section{Cassini oval boundedness }
\label{app boundedness}
\begin{proposition}  \label{prop:C bounded}
\begin{itemize}
    \item [(i)] For any $N,p$ and a non-zero finite value of parameter $\gamma$, the parameter $r_N(\gamma) = |C_{N, \gamma}|^{1/p}$, defining Cassini ovals \eqref{eq: Cassini contours} is uniformly bounded from above 
    $$r_N(\gamma) < \left(2e^{\gamma}+2\right)^{\frac{1}{\rho}}.$$
    \item [(ii)] For the solution delivering the largest and next to largest eigenvalue the corresponding constant $C$ is also uniformly bounded from zero. 
\end{itemize}

\end{proposition}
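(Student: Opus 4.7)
The strategy couples two identities satisfied by the Bethe roots. From the decoupled Bethe equations \eqref{eq: BAE in Zk}, every root $Z_j$ lies on the Cassini oval $\mathcal{C}(r_N)$, so $|1-Z_j|^\rho |1+Z_j|^{1-\rho} = r_N^\rho$. From the quantization identity \eqref{eq: momentum in Zk}, the moduli satisfy $\prod_{j=1}^p |1+Z_j| = 2^p e^{\gamma p}$. The plan for part (i) is to derive a one-sided pointwise bound linking $r_N$ to each $|1+Z_j|$ and then multiply over $j$.

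The key estimate combines the weighted AM--GM inequality with the elementary triangle bound $|1-Z_j| \le |1+Z_j| + 2$ (obtained by writing $1-Z_j = -(1+Z_j) + 2$):
\begin{equation*}
r_N^\rho = |1-Z_j|^\rho |1+Z_j|^{1-\rho} \le \rho |1-Z_j| + (1-\rho)|1+Z_j| \le |1+Z_j| + 2\rho,
\end{equation*}
hence $|1+Z_j| \ge r_N^\rho - 2\rho$. When $r_N^\rho > 2\rho$, taking the product over $j$ and using the quantization identity yields
\begin{equation*}
(r_N^\rho - 2\rho)^p \le \prod_{j=1}^p |1+Z_j| = (2e^\gamma)^p,
\end{equation*}
so that $r_N^\rho \le 2e^\gamma + 2\rho < 2e^\gamma + 2$. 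The complementary regime $r_N^\rho \le 2\rho$ is trivial since $2\rho \le 2 < 2e^\gamma + 2$. The uniform upper bound in (i) follows in all cases.

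For part (ii) in the regime $\gamma>0$, I argue by contradiction. Suppose a subsequence $r_{N_k}(\gamma) \to 0$. Under Assumption~\ref{ass: gammapos}, every selected root $Z_j = u_j(C_{N_k,\gamma})$ lies on the right lobe of $\mathcal{C}(r_{N_k})$, which collapses to the single point $\{1\}$ as the focal radius vanishes. Hence $Z_j \to 1$ uniformly in $j$, and the eigenvalue formula $\lambda_1(\gamma;N_k) = \tfrac{1}{2}\sum_{j=1}^p (Z_j - 1)$ forces $\lambda_1(\gamma;N_k) \to 0$. This contradicts the $N$-independent lower bound $\lambda_1(\gamma) \ge e^\gamma - 1 > 0$ from Result~\ref{result: bounds}, yielding a uniform lower bound on $|C_{N,\gamma}|$. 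The same reasoning, modified for the minimal-change choice \eqref{choice2}, handles the second-largest eigenvalue. For $\gamma<0$ the statement should be read as strict positivity ($C_{N,\gamma} \ne 0$ for each finite $N$): indeed $C_{N,\gamma}=0$ would imply $\prod_j (1+Z_j) = 0$, contradicting the quantization identity.

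The main obstacle I anticipate is precisely the interpretation of part (ii) in the $\gamma<0$ regime. Proposition~\ref{prop: CN(gamma) asympt serie} will later show that $|C_{N,\gamma}|$ decays doubly-exponentially in $N$ for $\gamma<0$, so a uniform $N$-independent lower bound is genuinely false; the argument above delivers only non-vanishing at each finite $N$, which is the correct reading consistent with the asymptotic analysis. All other steps are elementary manipulations of the Cassini identity combined with the rigid product constraint from the quantization condition.
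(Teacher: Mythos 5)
Your part (i) is correct and follows essentially the paper's route: both arguments extract a pointwise lower bound $|1+Z_j|\ge r_N^{\rho}-\mathrm{const}$ from the Cassini relation and then multiply over $j$ using $\prod_j|1+Z_j|=2^pe^{\gamma p}$ from \eqref{eq: momentum in Zk}. The paper gets $|1+Z_j|\ge r_N^{\rho}-2$ by parametrizing $1+Z_j=ae^{i\psi}$ and bounding the Cassini equation above by $(a+2)^{2/\rho}$; your weighted AM--GM plus $|1-Z_j|\le|1+Z_j|+2$ gives the marginally sharper $|1+Z_j|\ge r_N^{\rho}-2\rho$. Either way the stated bound follows. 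Your reading of part (ii) for $\gamma<0$ is also the right one: the paper's proof and its later use (in the proof of \eqref{eq: asym-1+}) concern the configuration with all roots on the right oval, and Proposition~\ref{prop: CN(gamma) asympt serie} shows a uniform-in-$N$ lower bound cannot hold for $\gamma<0$.

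There is, however, a genuine gap in your argument for part (ii) in the regime $\gamma>0$. From $r_{N_k}\to0$ you conclude $Z_j\to1$ uniformly in $j$ and then that $\lambda_1=\tfrac12\sum_{j=1}^{p}(Z_j-1)\to0$. This inference is invalid because the number of summands is $p=\rho N_k\to\infty$: on the collapsing right oval each term satisfies only $|Z_j-1|=r_{N_k}\,|1+Z_j|^{-(1-\rho)/\rho}=O(r_{N_k})$, so the naive bound on the sum is $O(N_k\,r_{N_k})$, which need not vanish (e.g.\ $r_{N_k}\sim N_k^{-1/2}$). Smallness of the sum actually relies on cancellation among the roots distributed around the oval (visible in the half-filling formula of Example~\ref{half-fill}, where the first-order terms sum to zero), and you do not establish this. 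The paper avoids the eigenvalue entirely and argues directly from the product identity: if $r_N$ is small, the right oval is approximately a circle of radius $2^{(\rho-1)/\rho}r_N$ about $1$, so every $|1+Z_m|$ lies within $O(r_N)$ of $2$; hence the geometric mean $2e^{\gamma}$ lies within $O(r_N)$ of $2$, which forces $r_N\gtrsim|e^{\gamma}-1|$ and gives the uniform lower bound for $\gamma$ bounded away from $0$. You should replace your contradiction via $\lambda_1$ with this (or an equivalent) quantitative use of the quantization identity.
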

\begin{proof}  (i) Let $N,p, \gamma$ non-zero fixed, $Z_m = u_{l(m)}(r_N)$ be a solution to Bethe equations for some choice function $l$. From the shift-invariance condition \eqref{def momentum}, to which all the solutions $u_{l(m)}(r)$ satisfy, we  obtain the following relation on absolute values.
\begin{equation} \label{eq: CC 102}
    \prod_{m=1}^{p} |Z_m +1| = 2^p e^{\gamma p}.
\end{equation}
These values also satisfy Cassini oval equation given by \eqref{eq: Cassini contours} becoming 
\begin{equation}
	  	|1-Z_m|^{\rho}|1+Z_m|^{1-\rho}= r_{N}^{\rho}.
\end{equation}
Parametrizing $1+Z_m = a e^{i \psi}$ and using the Cassini curve equation, we have
\begin{equation} \label{eq: Cassini in R}
	  	(a^2 - 4a \cos \psi + 4) a^{\frac{2(1-\rho)}{\rho}}= r_{N}^{2}.
\end{equation}
Estimating the left-hand side of this equation from above by $(a+2)^{\frac{2}{\rho}}$, we see that $a > r_{N}^{\rho}-2 > 0$ for sufficiently large $r_N$. Combining with \eqref{eq: cc1 simple} we have the following bound 
\begin{equation}
     r_{N}^{\rho}-2 <  \sqrt[p]{\prod_{m=1}^{p} |Z_m +1|}= 2 e^{\gamma}
\end{equation}
Therefore, the bound from above is $r_{N}< \left( 2e^{\gamma}+2\right)^{\frac{1}{ \rho}}$. 

\textit{(ii)} Assume that $r_{N}$ corresponding to the largest eigenvalue is close to zero. Then, the right Cassini oval can be approximated by a circle of radius $2^{\frac{\rho-1}{\rho}}r_{N}$ around $1$, so that for each $Z_m$ from this oval we have 
\begin{equation}
    2-2^{\frac{\rho-1}{\rho}}r_{N} < |1+Z_m| < 2+2^{\frac{\rho-1}{\rho}}r_{N}.
\end{equation}
Since for the largest eigenvalue all $Z_m$ are taken from the right oval, the equation \eqref{eq: CC 102} yields the bounds $r_{N}>2 |e^{\gamma}-1|$ for a non-zero $\gamma$. 
For some $\gamma$ this bound contradicts the assumption that $r_N$ is close to zero (or even $r_{N}<r_{\mathrm{cr}}$), but for finite $\gamma$ close to $0$ it provides a uniform bound in terms of $\gamma$. The similar bound for the next to the largest eigenvalue follows for large $p$ as the contribution of one $Z_m$ taken from the left oval while the remaining $p-1$ belong to the right one becomes negligible in the limit of large $N$.
\end{proof}

\newpage
\section{Contour integration}   \label{app: contour integration}
We fix $N,p$. 
In this part, we explain in detail how to estimate a complex integral 
\begin{equation} \label{eq: I(f,C)}
I_N(f,r)  =  \frac{1}{p} \int_{1}^{p} f \Big( v(w(z))\Big) \ dz
\end{equation}	
for a meromorphic complex-valued function $f(\cdot)$ defined on a domain $D_{+}$. The function $w(z)$ parametrizes a circle of radius $r$
\begin{equation}
    w(z):= r e^{\frac{2 \pi i}{p}(z-\frac{1}{2})}
\end{equation}
and $v(\cdot)$ is the inverse function of $w(z)$. 

We recall that the inverse function $v(\cdot)$ is not defined on $\partial D_+$. Therefore, to avoid $u_p$ lying on the boundary, i.e. the separating curve, we formulate the following lemmas for a real $r < r_{\mathrm{cr}}$.

\begin{figure}
\center{\includegraphics[width=1\linewidth]{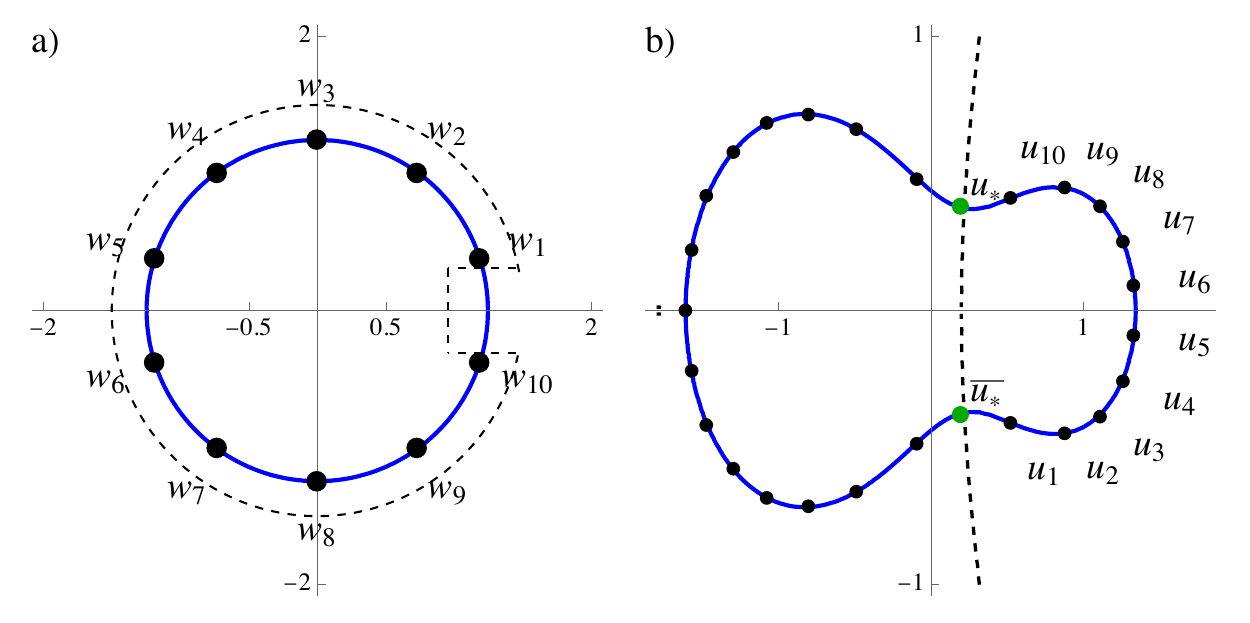}}
\caption{a) shows the original circular contour on a complex plane with a branch cut $[r_{\mathrm{cr}}, +\infty]$ along a real axis. 
The points $w_1$ to $w_p$ are shown for a real $C_{N,\gamma} < -r_{\mathrm{cr}}$. b) displays the equivalent Cassini oval contour $\Gamma(r)$ located in a domain $D_+$ on the right of a separating curve drawn by a dashed line.}
\label{fig: contour integration}
	\end{figure}
    
\begin{lemma} \label{lemB1: change of variables}
The integral $I_N(f,r)$ in terms of a variable $u=v(w(z))$ is given by
\begin{equation} \label{eq: I_N over Cassini}
I_N(f,r) = \frac{1}{2 \pi i \rho}\int_{\Gamma(r)}  \frac{(u-1+2 \rho) }{u^2-1} f(u) \ du,
\end{equation}
where the contour of integration $\Gamma(r)$ is a counter-clockwise Cassini contour segment from $u_1$ to $u_p$, as illustrated in Figure \ref{fig: contour integration}b).
\end{lemma}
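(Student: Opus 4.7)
The approach is a direct change of variables $u = v(w(z))$, combined with identifying what the image of the parameter interval $[1,p]$ looks like on the Cassini oval. Recall from Lemma \ref{lem: implicit} that $v$ is the inverse of the complex map $W(u) := (1-u)(1+u)^{1/\rho - 1}$ restricted to $D_+^\circ$, so the defining relation is simply $W(u) = w(z)$. Since the integrand is analytic along the contour (when $r \neq r_\mathrm{cr}$), the substitution is justified by standard complex analysis, and the whole content of the lemma is a Jacobian calculation together with an orientation check.

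First I would compute $dz/du$. Differentiating $W(u) = w(z)$ in $z$ gives $W'(u)\, du = w'(z)\, dz$, and since $w'(z) = (2\pi i / p)\, w(z) = (2\pi i / p)\, W(u)$, we obtain
\begin{equation*}
dz \;=\; \frac{p}{2\pi i} \cdot \frac{W'(u)}{W(u)}\, du.
\end{equation*}
Logarithmic differentiation of $W(u) = (1-u)(1+u)^{1/\rho - 1}$ yields
\begin{equation*}
\frac{W'(u)}{W(u)} \;=\; \frac{-1}{1-u} + \frac{1/\rho - 1}{1+u} \;=\; \frac{u - 1 + 2\rho}{\rho(u^2 - 1)},
\end{equation*}
after putting the two fractions over the common denominator $(u-1)(1+u)$ and simplifying. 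Substituting back gives $dz = \dfrac{p(u-1+2\rho)}{2\pi i \rho (u^2-1)}\, du$, so the factor of $p$ in the prefactor $1/p$ of $I_N(f,r)$ cancels, producing exactly the expression claimed in \eqref{eq: I_N over Cassini}.

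Second, I would verify that the image of $[1,p]$ under the map $z \mapsto v(w(z))$ is the arc $\Gamma(r)$ described in the lemma. As $z$ traverses $[1,p]$, $w(z)$ traces the major arc of the circle of radius $r$ counterclockwise from $w(1) = r e^{i\pi/p}$ to $w(p) = r e^{-i\pi/p}$, avoiding the positive real ray (hence avoiding the branch cut $[r_\mathrm{cr}, +\infty)$ of $v$ even when $r \geq r_\mathrm{cr}$). Because $v$ is analytic on this arc, the image is a smooth simple curve in $D_+^\circ$ starting at $u_1$ and ending at $u_p$ and passing through every $u_j$ for $j=2,\ldots,p-1$, i.e.\ it is precisely the Cassini segment $\Gamma(r)$.

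The main subtlety, and the point I would flag as requiring care, is the orientation. Analyticity of $v$ guarantees that the induced map is orientation-preserving in the sense of conformal maps, but to conclude that $\Gamma(r)$ is counterclockwise (as the boundary of a domain containing $u=1$) one must check the sign of the local rotation at at least one reference point. This can be done, for instance, by examining the sub-critical case $r$ small, where $v(w) \approx 1 - 2^{-(1-\rho)/\rho} (-w)^{1/\rho}$ and a counterclockwise small circle around $w = 0$ maps to a counterclockwise small loop around $u = 1$; by continuity of the orientation in $r$ on $\mathbb{C} \setminus [r_\mathrm{cr}, +\infty)$, this persists for all admissible $r$. Once this orientation check is in place, the formula follows immediately from the Jacobian calculation above.
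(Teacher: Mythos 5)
Your proof is correct and follows essentially the same route as the paper's: the identical logarithmic-derivative Jacobian computation $w'(u)/w(u) = (u-1+2\rho)/\bigl(\rho(u^2-1)\bigr)$, with the paper simply asserting the contour identification and orientation that you take the extra care to verify. One small slip in your orientation aside: the small-$r$ linearization is $v(w)\approx 1-2^{-(1-\rho)/\rho}\,w$ (linear in $w$, not $(-w)^{1/\rho}$), though this does not affect your conclusion since the leading term is still a degree-one, orientation-preserving map.
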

\begin{proof}
The parametrization of the integral \eqref{eq: I(f,C)} suggests a change of variables: instead of integrating over the interval from $1$ to $p$, we can integrate over a circle of radius $r$ walking counter-clockwise from $w_1$ to $w_p$ with a branch cut $[1, +\infty)$ along the real axis on a complex plane, see Figure~\ref{fig: contour integration}a). Alternatively, the integration can be performed along a Cassini oval starting from $u_1$ to $u_p$ in a counter-clockwise direction, see Figure~\ref{fig: contour integration}(b). 
Using the inverse function rule, we obtain the following change in differentials
\begin{align*}
    du &= \frac{dv}{dw} \cdot \frac{dw}{dz} \ dz \\
       &= \frac{1}{w'(u)} \cdot \frac{dw}{\ dz} \ dz \quad \text{(by inverse function derivative in $D_+^{\circ}$)} \\
       &= \frac{\rho(u^2 - 1)}{u - 1 + 2\rho} \cdot \frac{2\pi i}{p} \ dz \quad \text{(from \eqref{def: w(u)})}.
\end{align*}
Substituting into \eqref{eq: I(f,C)} yields \eqref{eq: I_N over Cassini}. 
\end{proof}

\begin{lemma} \label{lemB2:contour integration}
Let $\Gamma'(r) = [u_1,0] \cup [0, u_p]$. Assume a function $f(\cdot)$ has no poles on $\Gamma'(r)$. Closing the contour of integration $\Gamma(r)$ by a path from $u_1$ to $u_p$ through $0$, so that $\Gamma(r) =  \Gamma'(r) + \Gamma''(r)$, the integral $I_N(f,r)$ splits into two terms.
\begin{equation} \label{eq: I = res+int}
\begin{split}
I_N(f,r) =
		\sum_{q \in Q }  \res_{u = q}
		&\left( \frac{(u-1+2 \rho) }{\rho (u^2-1)} f(u) \right) + \frac{1}{2 \pi i} \int_{\Gamma'(r)}  \frac{(u-1+2 \rho) }{\rho (u^2-1)} f(u) \ du.
\end{split}
\end{equation}
The first term is a sum over all residues at points in the set $Q$ of an integrand function inside the closed contour $\Gamma''(r)$, and the second integral goes over $\Gamma'(r)$.

For a real $C_{N, \gamma}< 0$, the points $u_1$ and $u_p$ become complex conjugate while the last integral of an odd function $f(\cdot)$ over a symmetric curve is real-valued. In particular, for $f(z) = z-1$ and $f(z) = \log (z+1)$, we have 

\begin{equation} \label{eq: int rep of Lambda}
  I_N(z-1, r) =
   \frac{1}{\pi \rho} \Big( \Im u_p + 2(\rho-1) \arg (1+u_p)\Big),
\end{equation}
\begin{equation}
\begin{split}
  I_N\Big(\log (z+1), r \Big) = \log 2 + \frac{1}{2\pi} &\Big(\log 2 \arg\left(\frac{u_p-1}{u_1-1}\right) -2 \Im \Big( \operatorname{Li}_2 \Big(\frac{1-u_p}{2}\Big)\Big) \\
  &+ \frac{1-\rho}{\rho} \log|1+u_p| \arg \left(u_p+1\right)\Big).
  \end{split}
\end{equation}
\end{lemma}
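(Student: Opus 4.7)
The plan is to derive \eqref{eq: I = res+int} by closing the contour and invoking the Cauchy residue theorem, and then to verify the two explicit evaluations by combining a Schwarz-reflection symmetry argument with elementary antiderivatives. The decomposition $\int_{\Gamma(r)} = \int_{\Gamma''(r)} + \int_{\Gamma'(r)}$ is purely topological: orienting $\Gamma(r)$ from $u_1$ to $u_p$ counter-clockwise along the right oval (as in Lemma~\ref{lemB1: change of variables}) and the return path $-\Gamma'(r)$ from $u_p$ through $0$ to $u_1$ produces a positively oriented simple closed curve $\Gamma''(r)$, whose interior is determined by the geometry in Figure~\ref{fig: contour integration}. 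For $r<r_\mathrm{cr}$ the right oval and the chord through $0$ separate $u=1$ from $u=-1$, so the only candidate poles of the rational factor $(u-1+2\rho)/[\rho(u^{2}-1)]$ inside $\Gamma''(r)$ are among $\{u=1\}$ together with whatever poles of $f$ happen to lie there. Applying the residue theorem to $\Gamma''(r)$ and transposing the $\Gamma'(r)$-contribution yields \eqref{eq: I = res+int} with $Q$ equal to the set of interior poles.

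For $C_{N,\gamma}<0$ real, the choice $\theta=0$ in the parametrization $w_j=(-C)^{1/p}e^{i(2j-1)\pi/p}$ gives $w_1=\bar w_p$, and since $v$ is analytic on $\mathbb{C}\setminus[r_\mathrm{cr},+\infty)$ and real on a real neighbourhood of $u=1$, Schwarz reflection forces $u_1=\bar u_p$. Hence $\Gamma'(r)$ is symmetric under conjugation, and for $f$ real on the real slice of $D_+$ the integrand $g(u):=\frac{u-1+2\rho}{\rho(u^2-1)}f(u)$ satisfies $g(\bar u)=\overline{g(u)}$. A change of variable $u\mapsto\bar v$ on $[u_1,0]$ then identifies $\int_{u_1}^{0}g\,du=-\overline{\int_{0}^{u_p}g\,du}$, so
\begin{equation*}
\int_{\Gamma'(r)} g(u)\,du \;=\; 2i\,\Im\!\left(\int_{0}^{u_p} g(u)\,du\right),
\end{equation*}
which is purely imaginary; dividing by $2\pi i$ returns a real number, proving the asserted reality.

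The two explicit formulas then reduce to computing $\Im\int_{0}^{u_p}g(u)\,du$ with elementary primitives. For $f(z)=z-1$, the rational factor simplifies to $\frac{u-1+2\rho}{\rho(u+1)}=\frac{1}{\rho}[1+\frac{2(\rho-1)}{u+1}]$, so there is no pole at $u=1$ (hence the residue sum vanishes) and the primitive $\frac{1}{\rho}[u+2(\rho-1)\log(u+1)]$ gives \eqref{eq: int rep of Lambda} upon taking the imaginary part at $u=u_p$. For $f(z)=\log(z+1)$, partial fractions split the kernel as $\frac{1}{u-1}+\frac{1-\rho}{\rho(u+1)}$; the first piece has a simple pole at $u=1$ of residue $\log 2$ (the sole enclosed singularity), while the substitution $t=(1-u)/2$ produces the primitive $\log 2\cdot\log\!\bigl(\tfrac{1-u}{2}\bigr)-\operatorname{Li}_2\!\bigl(\tfrac{1-u}{2}\bigr)$, and the second piece integrates elementarily to $\frac{1-\rho}{2\rho}[\log(u+1)]^{2}$. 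Taking imaginary parts and combining with the Schwarz-symmetric factor $2i$ recovers the displayed expression for $I_N(\log(z+1),r)$.

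The main obstacle is branch-cut bookkeeping: the primitive involves $\log$ and $\operatorname{Li}_2$, whose standard branch cuts are $(-\infty,0]$ and $[1,+\infty)$ respectively, and we need the straight segments $[0,u_p]$ and $[u_1,0]$ to avoid them. This is where the hypothesis $r<r_\mathrm{cr}$ is essential: $u_1,u_p\in D_+^{\circ}$ lie on the small right oval near $u=1$, so the chords stay in a neighbourhood of the origin on which $(1-u)/2$ and $u+1$ remain in the principal domain of each transcendental function. A secondary, but purely combinatorial, point is to verify from Figure~\ref{fig: contour integration} that $u=1$ is the unique pole inside $\Gamma''(r)$ in both applications (with $u=-1$ strictly to the left of the chord through $0$); once these two points are clear, no further non-routine computation is needed.
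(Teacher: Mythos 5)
Your proposal is correct and follows essentially the same route as the paper: close the contour through the origin, invoke the Cauchy residue theorem, exploit the conjugate symmetry $u_1=\bar u_p$ (for real negative $C$) to reduce to $\Im\int_0^{u_p}$, split the kernel by partial fractions, and integrate the $\log(1+u)/(u-1)$ piece via the dilogarithm. Your Schwarz-reflection argument makes explicit a symmetry step the paper only states in passing; the only minor over-reach is the closing remark that $r<r_\mathrm{cr}$ is essential for the branch-cut bookkeeping --- in fact the chords $[u_1,0]\cup[0,u_p]$ avoid $(-\infty,-1]$ whenever $u_1,u_p\in\overline{D_+}$, which is what Lemma~\ref{lemB3: asymptotis of IN} relies on when reusing these formulas in the supercritical regime.
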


\begin{proof}
Indeed, the first integral over the closed contour can be evaluated using the Cauchy residue theorem, and the second integral along two segments from $u_1$ to $u_p$ remains. 

\textbf{Case $f(z) = z - 1.$} The integrand of $I_N(z-1, r)$ has no residues inside the
closed contour $\Gamma''(r)$; therefore, only the second term remains  
\begin{equation} 
\begin{split}
   I_N(z-1, r) &= \frac{1}{2 \pi i \rho} \int_{\Gamma'(r)}    \frac{u-1+2 \rho }{ u+1}  \ du\\
	&=\frac{1}{2 \pi \rho} \Im \left(u_p-u_1 + 2(\rho-1)\log \frac{1+u_p}{1+u_1} \right). 
    \end{split}
\end{equation}
For a complex conjugate $u_1$ and $u_p$ the further simplification gives \eqref{eq: int rep of Lambda}.

\textbf{Case $f(z) = \log(z+1).$} The closed contour $\Gamma''(r)$ includes a unique pole at $u=1$, yielding the residue $\log 2$. 
\begin{equation} 
\begin{split}
		I_N\Big(\log (z+1), r \Big) = 
			\log 2 
            + \frac{1}{2 \pi i \rho} \int_{\Gamma'(r) }  \frac{(u -1 + 2 \rho) }{u^2-1} \log (1+u) \ du 
        \end{split}
\end{equation}
For a complex conjugate $u_1$ and $u_p$ only the contribution from the imaginary part remains 
\begin{equation} \label{eq: G(Zp) int}
\begin{split}
  I_N\Big(\log (z+1), r \Big) = \log 2
  + \Im \Big[\frac{1}{ 2 \pi \rho} \int_{u_1}^{u_p}  \frac{(u -1 + 2 \rho) }{u^2-1} \log (1+u) \ du \Big] 
  \end{split}
\end{equation}
The last integral could be split into terms 
\begin{equation}
\begin{split}
   \int \frac{(u -1 + 2 \rho) }{u^2-1} \log (1+u)\ du  = \int \left(\frac{\rho}{u-1} + \frac{1-\rho}{u+1} \right) \log(1+u) \ du\\
   =\rho \left( \log 2 \log(u-1) + \int \frac{\log(1+\frac{u-1}{2})}{u-1} \ du \right)+ \frac 12(1-\rho) (\log(1+u))^2,
\end{split}
\end{equation}
and integrated in terms of the dilogarithm $\operatorname{Li}_2(z) := -\int_0^z \frac{\ln(1-t)}{t}dt$ with $z \notin [1, +\infty]$ appearing naturally from the middle term. 
Indeed, in a new variable $\tilde{u} = -\frac{u-1}{2}$, we have
\begin{equation}
\begin{split}
    \int_{\Gamma'(r)} \frac{\log(1+\frac{u-1}{2})}{u-1} \ du  &= \int_{\frac{1-u_1}{2}}^{\frac{1}{2}} \frac{\log(1-\tilde{u})}{\tilde{u}} d\tilde{u} + \int_{\frac{1}{2}}^{\frac{1-u_p}{2}} \frac{\log(1-\tilde{u})}{\tilde{u}} d\tilde{u}\\
    &= \operatorname{Li}_2\left(\frac{1-u_1}{2}\right)-\operatorname{Li}_2\left(\frac{1-u_p}{2}\right)
\end{split}
	\end{equation}
Finally, we collect all the terms  
\begin{equation}
\begin{split}
  I_N\Big(\log (z+1), r \Big) &=  \frac{1-\rho}{\pi \rho} \log|1+u_p| \arg \left(u_p+1\right)\\
  &= \log 2 + \frac{1}{\pi} \Big(\log 2 \arg(1-u_p) - \Im \Big( \operatorname{Li}_2 \Big(\frac{1-u_p}{2}\Big)\Big)\Big) \\
  &+ \frac{1-\rho}{\pi \rho} \log|1+u_p| \arg \left(u_p+1\right).
  \end{split}
\end{equation}

\end{proof}

\begin{lemma} \label{lemB3: asymptotis of IN}
    For a real supercritical $C_{N, \gamma}< -r_{\mathrm{cr}}^p$, the integrals $I_N(z-1, r )$ and $I_N(\log (z+1), r )$ have the following asymptotic expansion as $N$ and $p$ grow, keeping $\rho$ fixed.
\begin{equation} 
  I_N(z-1, r) =
   \frac{1}{\pi \rho} \Big( \Im u_* + 2(\rho-1) \arg (1+u_*)\Big) +O(N^{-1}). 
\end{equation}
\begin{equation}
\begin{split}
  I_N\Big(\log (z+1), r \Big) = \log 2 + \frac{1}{2\pi} &\Big( \log 2 \arg\left(\frac{u_*-1}{\overline{u_*}-1}\right) - 2\Im \Big( \operatorname{Li}_2 \Big(\frac{1-u_*}{2}\Big)\Big) \\
  &+ \frac{1-\rho}{\rho} \log|1+u_*| \arg \left(u_p+1\right)\Big) +O(N^{-1}),
  \end{split}
\end{equation}
\end{lemma}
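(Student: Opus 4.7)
The plan is to leverage Lemma~\ref{lemB2:contour integration}, which already provides closed-form finite-$N$ expressions for $I_N(z-1,r)$ and $I_N(\log(z+1),r)$ in terms of the two endpoints $u_1=u_1(C_{N,\gamma})$ and $u_p=u_p(C_{N,\gamma})$. Since $C_{N,\gamma}$ is real and negative, $u_1$ and $u_p$ are complex conjugates, so the task reduces to controlling the single scalar convergence $u_p \to u_*$ with a quantitative rate, and then expanding the closed-form expressions to first order around the limit.

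To establish the convergence rate, I would argue as follows. Writing $C_{N,\gamma}=-r^p$ with $r>r_\mathrm{cr}$, we have $u_p=v(w_p)$ for $w_p=re^{-i\pi/p}$, which sits in the open lower half-disk of radius $r$. The inverse function $v$ from Lemma~\ref{lem: implicit} is analytic on $\mathbb{C}\setminus[r_\mathrm{cr},+\infty)$ with boundary-value limit $u_*=\lim_{\delta\downarrow 0}v(re^{-i\delta})$ from the lower side. For fixed $r>r_\mathrm{cr}$, the limit point $u_*\notin\{\pm 1,\,1-2\rho\}$, so the derivative
\[
\frac{dv}{dw}=\frac{u-1+2\rho}{\rho(u^{2}-1)}
\]
has a finite limit at $w=r$ along any non-tangential approach from $\Im w<0$. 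Hence $v$ extends Lipschitz-continuously to the closed half-disk near $r$, yielding $|u_p-u_*|=O(|w_p-r|)=O(1/p)=O(N^{-1})$; the bound $|u_1-\bar u_*|=O(N^{-1})$ follows by conjugation.

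With the convergence rate in hand, I would substitute $u_p=u_*+O(N^{-1})$ into the exact expressions of Lemma~\ref{lemB2:contour integration} and Taylor-expand the elementary ingredients $\Im(\cdot)$, $\arg(1+\cdot)$, $\log|1+\cdot|$, $\arg(1-\cdot)$, and $\operatorname{Li}_2((1-\cdot)/2)$. Each of these is real-analytic in a neighbourhood of $u_*$, since $u_*\neq\pm 1$ and $(1-u_*)/2\neq 1$ when $r>r_\mathrm{cr}$; a single application of the mean-value theorem therefore converts the $O(N^{-1})$ displacement of the endpoint into the claimed $O(N^{-1})$ correction of the integral. For the second identity, the only nontrivial piece is the dilogarithm, whose derivative $-\log(1-z)/z$ remains bounded near $z=(1-u_*)/2$ for supercritical $r$, so no new estimates are required.

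The main obstacle I anticipate is the uniformity of the constant in $O(N^{-1})$: as $r\downarrow r_\mathrm{cr}$ the limit $u_*\to 1-2\rho$, at which point $dv/dw$ diverges and the Lipschitz constant degenerates, while as $r\to\infty$ the branch behaviour of $v$ at infinity must be tracked. Both pathologies are excluded by confining attention to $r\in[r_\mathrm{cr}+\epsilon,R]$ for any fixed $\epsilon>0$ and for $R$ given by the a priori upper bound of Proposition~\ref{prop:C bounded}; on such a compact interval the Taylor estimates are uniform, which is exactly the regime in which the lemma is invoked in Section~\ref{sec: asym+}.
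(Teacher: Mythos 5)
Your proposal is correct and follows essentially the same route as the paper: both arguments reduce the lemma to the quantitative convergence $|u_p-u_*|=O(N^{-1})$ (the paper via the $O(N^{-1})$ spacing of consecutive roots on the Cassini oval, you via $|w_p-r|=O(1/p)$ and the bounded derivative of $v$ near $w=r$, which is the same underlying fact) and then substitute into the exact finite-$N$ formulas of Lemma~\ref{lemB2:contour integration}. One small slip: the correct derivative is $dv/dw = \rho(u^2-1)/\big(w(u-1+2\rho)\big)$, not $(u-1+2\rho)/(\rho(u^2-1))$, but this does not affect your conclusion that it stays bounded for $r$ in a compact subset of $(r_\mathrm{cr},\infty)$.
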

\begin{proof}
As $N,p \to +\infty$ with fixed $\rho$ and  $r \neq r_{\mathrm{cr}}$, the discrete points $\{u_j\}$ become uniformly dense on a Cassini oval. Taylor-expanding $u_{p+1}$ around $u_p$, we observe
\begin{equation} 
    u_{p+1}(r) = u_{p}(r) + \frac{2 \pi i}{N} \frac{u_p^2(r)-1}{u_p(r)-1+2 \rho} + o(N^{-1}).
\end{equation}
with the characteristic distance between the neighbouring points of order $O(N^{-1}).$
Thus, $u_p \to u_*$ with an error $\mathcal{O}(N^{-1})$. Substituting $u_*$ into previous results gives the asymptotic forms.

\end{proof}
\begin{example}
    At half-density the function \eqref{eq: G(Zp) int} takes the following form
	\begin{equation} 
		I_N(z-1, r)  =
			\log 2 + \frac{2}{ \pi } \int_{0}^{ \arctan(\sqrt{r-1})}  x \ \tan(x) dx, \quad \rho = \frac 12, 
	\end{equation}    
where we used the equation \eqref{half-fillr*} and changed a variable $u = i \tan x$.
\end{example}

\section*{Acknowledgements}
We thank Stefano Olla, Ofer Zeitouni, and Alexander Povolotsky for insightful discussions and valuable comments that greatly contributed to this work. We are also grateful to Gunter Schütz for generously sharing important references. Additionally, we thank Artem Lipin for his assistance with the code used to generate Figures \ref{fig: eigvals pos gamma} and \ref{fig: eigvals neg gamma}.

 \printbibliography

\end{document}